\newcommand{\Cren}{\mathcal{C}_{2|1,n}}
\newcommand{\Vren}{V_{2|1,n}}
\newcommand{\wren}{\omega_{2|1,n}}
\newtheorem{theorem}{Lemma}
\begin{document}
\renewcommand{\thefootnote}{\fnsymbol{footnote}}
\title{Low-Latency Hybrid NOMA-TDMA: 
QoS-Driven Design Framework
\thanks{Y. Zhu, Y. Hu, Xiaopeng Yuan and A. Schmeink are with the  ISEK Research Area/Lab, RWTH Aachen University, 52074 Aachen, Germany (e-mail: \{zhu,hu,yuan,schmeink\}@isek.rwth-aachen.de). T. Wang is with the School of Electronics and Information Engineering, Harbin Institute of Technology, Shenzhen 518055, China (e-mail: tongwang@hit.edu.cn). M. C. Gursoy is with the Department of Electrical Engineering and
Computer Science, Syracuse University, NY 13210, USA (e-mail: mcgursoy@syr.edu). Y. Hu is the corresponding  author. } }
\author{
Yao Zhu, %
Xiaopeng Yuan, %
Yulin Hu, 
Tong Wang, \\\vspace{-15pt} %
M. Cenk Gursoy, %
Anke Schmeink%
\vspace*{-1.65cm} 
}
\maketitle
\begin{abstract}
Enabling ultra-reliable and low-latency communication services while providing massive connectivity is one of the major goals to be accomplished in future wireless communication networks. %
In this paper, we investigate the performance of a hybrid multi-access scheme in the finite blocklength (FBL) regime that combines the advantages of both non-orthogonal multiple access (NOMA) and time-division multiple access (TDMA) schemes. Two latency-sensitive application scenarios  are studied, distinguished by whether the queuing behaviour has an influence on the transmission performance or not. In particular, for the latency-critical case with one-shot transmission, we aim at a certain physical-layer quality-of-service (QoS) performance, namely the optimization of the   reliability. And for the case in which queuing behaviour plays a role, we focus on the link-layer QoS performance and provide a design that maximizes the effective capacity.
For both designs, we leverage the characterizations in the FBL regime to provide the optimal framework by jointly allocating the blocklength and transmit power of each user. In particular, for the reliability-oriented design, the original problem is decomposed and the joint convexity of sub-problems is shown via a variable substitution method. For the effective-capacity-oriented design, we exploit the method of Lagrange multipliers to formulate a solvable dual problem with strong duality to the original problem. 
Via simulations, we validate our analytical results of convexity/concavity and show the advantage of our proposed approaches compared to other existing schemes.
\end{abstract}
\vspace*{-0.3cm}
\begin{IEEEkeywords}
finite blocklength regime, reliability, effective capacity, NOMA, TDMA
\end{IEEEkeywords}
\vspace{-10pt}
\section{Introduction}
In future 6G wireless communication networks, the  ultra-reliable and low-latency  communication (URLLC) 
services are expected to be integrated with massive connectivity.
{This is fundamentally important, since it will enable a wide range of novel delay-sensitive and mission-critical applications, e.g., autonomous  driving, augmented/virtual/mixed reality   and factory automation, by satisfying massive number of users' stringent requirements on the latency and error probability~\cite{intro_mMTC, Meng2022Sampling}.}

On the one hand, due to the stringent latency requirements of URLLC,   
  short packet communications are likely to be employed, where the transmissions are carried out via so-called finite blocklength (FBL) codes~\cite{URLLC_intro}.
  Under this FBL assumption, data transmissions are no longer arbitrarily reliable, especially when the blocklength is short or  signal-to-noise ratio (SNR) is low. 
  In the landmark work of Polyanskiy \emph{et al.}~\cite{Polyanskiy_2010}, the maximal achievable coding rate as a function of target error probability is expressed in closed-form in AWGN channels. 
Subsequently, the FBL performance analysis has been   extended to Gilbert-Elliott channels~\cite{Polyanskiy_2011}, flat-fading channels~\cite{Yang_2014} and random access channels~\cite{FBL_random_channel}. Based on those models, FBL performance characterizations have also been widely addressed in various wireless single-user networks including MIMO~\cite{FBL_MIMO}, relaying~\cite{Hu_TWC_2015}, energy-harvesting~\cite{FBL_EH1}
and {quality-of-service (QoS) constrained downlink networks~\cite{FBL_QoS1}}.

On the other hand,  {enhancing massive connectivity  is another major concern  in 6G~\cite{intro_next}.} 
{In fact, in comparison to 5G, one of the key challenging tasks  in 6G is to provide URLLC to massive users with limited resources. This new service class is also referred to as mURLLC~\cite{Chen_MA_mURLLC,Zhang_mURLLC_FBL}. In order to overcome this challenge, many novel multiple access schemes are considered,} where 
time-division multiple access (TDMA) is one of the most popular orthogonal multiple access (OMA) schemes that is able to efficiently provide wireless access to multiple users via dynamically allocating the length of each slot~\cite{TDMA}.
However, TDMA alone can be inefficient to address massive connectivity in the FBL regime, since it divides the entire frame (which is already short due to the low latency requirements) into slots with even shorter blocklength, which can  deteriorate the performance. {According to~\cite{Polyanskiy_2010}, extremely short blocklength leads to poor reliability even if the channel link has a high  quality~\cite{FBL_access}.}
This motivated us to investigate non-orthogonal multiple access (NOMA) techniques, which are envisioned as a potential solution and has been widely studied in both industry and academia~\cite{intro_NOMA_3GPP,intro_NOMA_1,intro_NOMA_2,intro_NOMA_3}. 
The key advantage of NOMA compared with conventional OMA schemes is the high spectrum efficiency achieved by sharing the bandwidth among users (and hence not requiring to split the blocklength). 
Recently, the FBL performance of NOMA in different system setups has been investigated~\cite{Sun_joint_NOMA,Hu_coop_NOMA,Lai_coop_NOMA, Han_massive_NOMA}. 
In particular, the authors in~\cite{Sun_joint_NOMA} provide an optimal joint 
design of power control and rate adaptation, where the throughput of one NOMA user is maximized while guaranteeing the throughput constraint of the other NOMA user. 
In~\cite{Hu_coop_NOMA}, a cooperative NOMA scheme in the FBL regime is studied, where the NOMA performance is enhanced via relaying. 
The authors in~\cite{Lai_coop_NOMA} further analyze the average error probability with flat Rayleigh fading channels via a linear approximation of Polyanskiy's FBL model. 
Moreover, a framework with multiple NOMA power levels is proposed in~\cite{Han_massive_NOMA}, aiming at supporting massive connectivity, where a joint blocklength and power allocation is optimized via a reinforcement learning approach.

However, NOMA scheme also introduces interference among the users and relies on the performance of successive interference cancellation (SIC)~\cite{intro_NOMA_SIC}. This will significantly influence the QoS of the transmissions, e.g., the reliability and the delay due to retransmissions.  %
{Based on above observations, it is essential to combine the advantage of both NOMA and  TDMA to strike a balance between longer blocklength and lower/no interference to enhance the massive access while providing URLLC services.} {The so-called hybrid NOMA-TDMA scheme is proposed initially in a mobile edge computing scenario, which is a combination of the uplink NOMA scheme within the user pairs and the TDMA scheme between the user pairs~\cite{hybrid_2}.}
Subsequently, it is also applied in energy harvesting-enabled systems~\cite{hybrid_1}, machine-to-machine communications~\cite{hybrid_3}, as well as UAV-assisted networks~\cite{hybrid_4}. 
{ The major advantage of the hybrid NOMA-TDMA compared with existing massive access scheme is the increased access capacity with limited resources while fulfilling the heterogeneous QoS requirements~\cite{intro_MA,intro_She}. More importantly, it is realized without additional energy cost since the IoT devices have generally low power.  However, how to provide an optimal design for the hybrid NOMA-TDMA scheme while taking into account the impact of operating in the FBL regime is still an open challenge in massive access, which is not yet addressed in the literature.} 
{For example, the impact of imperfect SIC due to the decoding error, where the error probability will propagate from the higher power level user to lower power level user, is often underestimated or even overlooked. Moreover, the blocklength allocation for each time slot in the TDMA scheme in the FBL regime is also non-trivial due to the non-linear correlation between blocklength and reliability. In particular, %
{how to address the aforementioned trade-off between TDMA slot length and NOMA interference by optimally allocating the transmit power and blocklength of each user with limited radio resources is still an open problem. This issue is especially critical in the multi-user scenarios, where the scalability of the allocation schemes should also be considered. }
More interestingly, the interplay of blocklength and transmit power allocation for the hybrid NOMA-TDMA scheme heavily influences the overall system performance, which should be carefully characterized. }

In this paper, we provide a design framework for the hybrid NOMA-TDMA scheme in a  multi-user uplink network operating with  FBL codes. {In particular, two typical scenarios based on different queue-behaviors are considered: $i)$ First, we address the queue-free scenario in which  mission-critical information with limited data size is generated and transmitted within a frame,} and our  design aims at a reliability-oriented physical-layer performance, specifically, the minimization of the maximal error probability among users;
{$ii)$ In the queue-influenced scenario, data arrives continuously at each user and is transmitted under given QoS requirements, in which the queuing behavior plays a role.}  In such a case, our    design focuses on a link-layer performance, i.e.,    the maximization of the effective capacity.
Our main contributions are summarized as follows:
\begin{itemize}
    \item For reliability-oriented design, we formulate an optimization problem which jointly allocates the blocklength of each slot and the transmit power of users in each pair, with the objective to minimize the maximal error probability among users. Since the optimization problem is non-convex, we transfer it from time domain into channel state domain. We further decompose the transferred problem. 
    {Moreover, for the first time, we rigorously prove the joint convexity of error probability (with respect to the power  of each NOMA user  and blocklength) based on the NOMA-influenced signal-to-interference-plus-noise ratio (SINR)} so that the obtained subproblems can be solved via a variable substitution method.
    \item For effective capacity-oriented design, we also formulate the optimization problem to maximize the sum of effective capacity of all users. However, instead of directly decomposing the problem, we  leverage the method of Lagrange multiplier to formulate a dual problem with strong duality. Then, we decompose the dual problem into subproblems.  We for the first time rigorously prove that the coding rate and effective capacity of the considered NOMA-TDMA scheme are jointly concave with respect to SINR and blocklength. %
    Based on our analytical findings, the subproblems can be characterized as convex problem with reformulated energy budget constraints.  As a result, the original problem can be solved iteratively.
    \item Via simulations, we show the significant advantages of our proposed framework compared to other benchmarks. The impact of various parameters is also discussed to provide guidelines for practical use cases.
\end{itemize}

The rest of the paper is organized as follows.  
Section II presents the  model of the considered hybrid NOMA-TDMA system. 
{Section~III provides the reliability-oriented design.} %
In Section IV, we address the effective capacity-oriented design and propose an approach to obtain the optimal solutions.
We provide numerical results in Section V and conclude the paper in Section~VI. 
\vspace{-10pt}
\section{System model}
\begin{figure}[!t]
\begin{minipage}[t]{0.47\linewidth}
    \centering
\includegraphics[width=0.5\textwidth,trim=20 10 25 0]{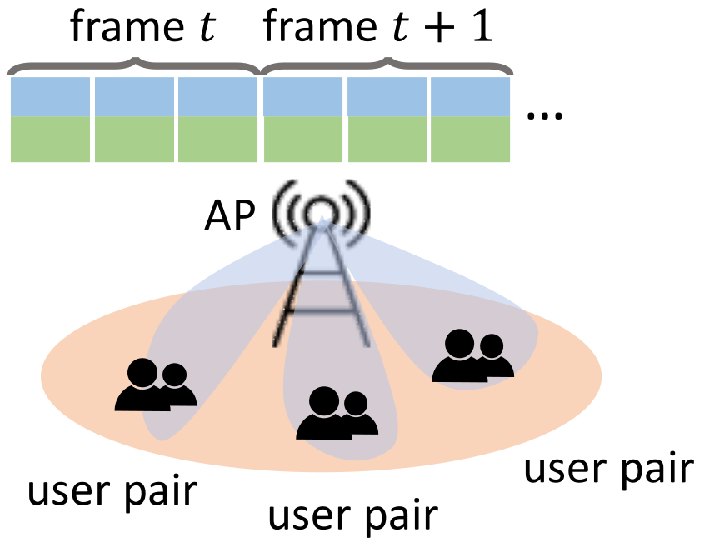}
\caption{ {An example of considered low-latency IoT networks. The transmissions are carried out frame-wise.}}
\label{fig:model}
\end{minipage}
\begin{minipage}[t]{0.005\linewidth}
    ~
\end{minipage}
\begin{minipage}[t]{0.47\linewidth}
    \centering
\includegraphics[width=0.8\textwidth,trim=20 10 25 15]{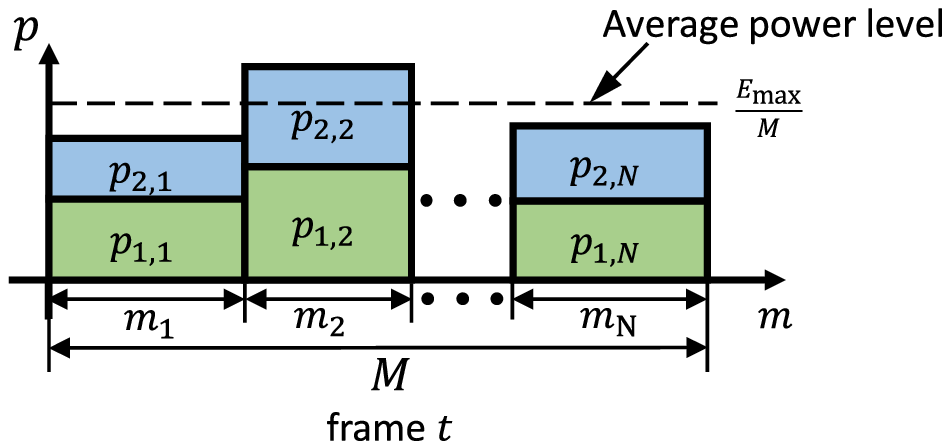}
\caption{ {Framework structure of considered hybrid NOMA-TDMA scheme in the time frame $t$. The available resource is restricted by both total blocklength $M$ and energy budget $E_{\max}$.}}
\label{fig:framework}
\end{minipage}
\vspace{-20pt}
\end{figure}
We consider a low latency IoT network with a server responsible for receiving delay-sensitive information from 
total $N$  user pairs as shown in Fig~\ref{fig:model}, where $n\in \mathcal{N}$ is the index of a user pair and $\mathcal{N}=\{1,\dots,N\}$ denotes its corresponding set.
Packets from users with the same size of $d$ bits are required to be transmitted to the server 
in each frame $t\in\mathcal{T}$ with a total available blocklength of $M$ (in symbols), where $\mathcal{T}=\{1,\dots,T\}$ is the set of time frame indices. The transmissions are carried out via a hybrid NOMA-TDMA scheme. 
{In particular, the transmissions between the server and different user pairs are operated in a TDMA manner\footnote{
{{In this work, we consider the communications are carried out with single carrier. However, it can also be carried out in a OFDMA manner, where we assign the subcarriers to each user pair instead of time slots. In fact, we can generalize the hybrid NOMA-TDMA scheme into a hybrid NOMA-OMA scheme by normalizing both the resources in frequency domain and in time domain. In this work, we focus on the hybrid NOMA-TDMA scheme, where the analysis can be applied to the hybrid NOMA-OFDMA scheme if we assume the channels are frequency-flat.}} }.  
{Therefore, the entire time duration is divided into $N$ time slots and each user pair is associated with one slot.} Let us denote by $\mathbf{m}=\{m_1,\dots,m_N\}$ the blocklengths assigned to each pair. %
Then,    $\sum^N_{i=1} m_i\leq M$ holds in order to guarantee the delay requirement. 
On the other hand, the transmissions between the server and two NOMA users in a given user pair share the same radio frequency band, i.e., uplink NOMA scheme is adopted for these transmissions.}
The channel information is assumed to be available at server side. 
{Hence, for each user pair, the server always regards the user with the lower channel gain as user 1 and the other is referred to as user 2,} i.e., $z_{1,n}\leq z_{2,n}$, where $z_{1,n}$ and~$z_{2,n}$ respectively denote the channel power gains (including path-loss) of user 1 %
and user 2 in the $n^{\rm th}$ pair. In this work, the channels of different users are assumed to be independent and
experience quasi-static fading, i.e., the channel state of each
link is constant during one block, and varies independently to 
the next. {Therefore, the order of users may also vary.}
{User 1 transmits the packets with power of $p_{1,n}$ and user 2 with $p_{2,n}$, while the maximal available transmit power for those users is $P_{\max}$.} %
Then, in each uplink NOMA transmission of the $n^{\rm th}$ user pair, the received signal at the server is given by 
\vspace{-.2cm}
\begin{equation}\vspace{-.2cm}
    \label{eq:signal_server}
    y_n=\sqrt{z_{1,n}p_{1,n}}x_{1,n}+\sqrt{z_{2,n}p_{2,n}}x_{2,n}+w_{n},
\end{equation}
where $x_{1,n}$ and $x_{2,n}$ are the transmitted information signals from user 1 and user 2, meanwhile~$w_{n}$ is the corresponding additive white Gaussian noise (AWGN) with zero mean and variance~$\sigma_{n}^2$. %

After receiving  signal $y_{n}$, the server first attempts to decode the signal for the stronger user  $x_{2,n}$ based on the SINR given by
\vspace{-.2cm}
\begin{equation}\vspace{-.2cm}
    \label{eq:snr_app}
    \gamma_{2|1,n}=\frac{z_{2,n}p_{2,n}}{z_{1,n}p_{1,n}+\sigma^2_{n}}\approx \frac{z_{2,n}p_{2,n}}{z_{1,n}p_{1,n}}.
\end{equation}
The approximation   follows the assumption  that interference is significantly stronger than the noise. {The validity and accuracy of this approximation will be shown in Section III and Section V.} 
After decoding the signal of user 2, server employs SIC to remove signal $x_{2,n}$ from $y_{n}$ and decodes signal from user 1 with  an SNR given by
    $\gamma_{1|1,n}=\frac{z_{1,n}p_{1,n}}{\sigma^2_{n}}.$
Note that  transmissions are carried out via FBL codes and are possible to be erroneous. 
{In such a case, user 1 has to decode its own signal directly based on the SINR of $\gamma_{1|2,n}=\frac{z_{1,n}p_{1,n}}{z_{2,n}p_{2,n}+\sigma^2_{n}}.$}

 \renewcommand*{\thefootnote}{\arabic{footnote}} 
\vspace{-.4cm}
\subsection{Transmission Rate with FBL Codes}
Due to the low-latency requirements, blocklength $m$ can no longer be regarded as infinite, precluding the direct use of Shannon capacity limit. To characterize the FBL performance more accurately, the authors in~\cite{Polyanskiy_2010} have derived the following tight bound on the maximal achievable transmission rate with target error probability $\bar\varepsilon$ %
in AWGN channels:  %
\vspace{-.2cm}
\begin{equation}\vspace{-.2cm}
\label{eq:maximal_rate}
    r^*\approx \mathcal{C(\gamma)}-\sqrt{\frac{V(\gamma)}{m}}Q^{-1}(\bar\varepsilon)\vspace{-5pt}
\end{equation}
 where  ${\mathcal{C}}(\gamma) = {\log _2}( {1 + \gamma } )$ is the Shannon capacity and ~$V(\gamma)$ is the channel dispersion~\cite{Chem_2015}. In the complex AWGN channel\footnote{{It is worthwhile to mention that the expression of $V$ may vary with the type of multi-access scheme of channel and depends on the coding scheme~\cite{FBL_random_channel}. For example, with i.i.d. Gaussian codes, channel dispersion is given by $V_{\mathrm{iid}}=\frac{\gamma}{1+\gamma}$~\cite{dispersion_V},  In this work, we adopt the widely used channel dispersion in~\cite{Ding_NOMA,Han_massive_NOMA}. However, it should be pointed out that our analytical results in the rest of section still hold with $V_{\mathrm{iid}}$.}}, $V(\gamma)=1- {(1+\gamma)^{-2}}$.  Moreover, $Q^{-1}(x)$ is the inverse Q-function with Q-function defined as $Q(x)=\int^\infty_x \frac{1}{\sqrt{2\pi}}e^{-\frac{t^2}{2}}dt$. 
Additionally, for any given data size $d$, according to~\eqref{eq:maximal_rate}, 
the (block) error probability %
for a single transmission is given by:
\vspace{-.2cm}
\begin{equation}\vspace{-.2cm}
\label{eq:error_tau}
{\textstyle
\varepsilon \!=\! {\mathcal{P}}(\gamma,\frac{d}{m},m)\! \approx\! Q\Big( {\sqrt {\frac{m}{V(\gamma)}} ( {{\mathcal{C}}(\gamma ) \!-\! \frac{d}{m})} \ln2} \Big)\mathrm{.}}\vspace{-5pt}
\end{equation}

{We consider such an error probability as the performance metric of the queue-free scenario.} 
\subsection{Effective Capacity}

{Let us consider users operating under constraints on the queueing delay in the queue-limited scenario. Therefore, a metric that takes queueing delay into account should be considered.} The transmissions of these users are subject to QoS constraints specified by a QoS exponent $\theta$. More specifically, let $Q$ denote the stationary queue length. Then, if we denote the queue threshold by $q$, the QoS exponent $\theta$ is defined as the decay rate of the tail of the distribution of queue length $Q$~\cite{EC_wu}:
\vspace{-.2cm}
\begin{equation}\vspace{-.2cm}
    \label{eq:theta}
    \theta=-\lim_{q\to\infty}\frac{\log \mathrm{Pr}(Q\geq q)}{q}.
\end{equation}

For a sufficiently large threshold $q_{\max}$, the buffer violation  probability can be approximated as:
\vspace{-.2cm}
\begin{equation}\vspace{-.2cm}
    \label{eq:delay_violation}
    \mathrm{Pr}(Q\geq q_{\max})\approx e^{-\theta q_{\max}}.
\end{equation}
Hence, large $\theta$ indicates a relatively strict QoS constraint, while small $\theta$ implies a loose one. %

Assume that the transmission system follows the general queuing model in~\cite{EC_wu}. Then, the average arrival rate in the queue must be equal to the average departure rate if the queue is in steady state. Let us denote the instantaneous arrival and service rates at the queue by $a$ and $s$, respectively. Then, in order for the buffer overflow probability to decay with rate  $\theta$ (or equivalently in order for (6) to hold), we have to satisfy the following condition:
\vspace{-.2cm}
\begin{equation}\vspace{-.2cm}
    \Lambda_a(\theta)+\Lambda_s(-\theta)=0.
\end{equation}
{Specially, for any random process $x\in\{a,s\}$,  $\Lambda_x(\theta)=\lim_{T\to\infty}\frac{1}{T}\log \mathbb{E}[e^{\theta X[T]}]$ with $X[T]=\sum^T_{t=1} x(t)$ is the asymptotic logarithmic moment
generating function (LMGF), where $\{x(t)|t=1,2,\dots\}$ denotes the discrete-time stationary and ergodic stochastic service progress~\cite{stability}.} Based on the LMGFs, the effective capacity,  which quantifies the maximum constant arrival rate that can be supported subject to the queuing constraint in~\eqref{eq:theta}, is given by:
\vspace{-.2cm}
\begin{equation}\vspace{-.2cm}
    \label{eq:EC_original}
    R=-\frac{\Lambda_s(-\theta)}{\theta}=-\lim_{T\to\infty}\frac{1}{\theta T}\log \mathbb{E}[e^{-\theta S[T]}].
\end{equation}

Considering the aforementioned system models and performance metrics, we discuss two different scenarios with practical aims and their corresponding design frameworks in the subsequent sections.
\vspace{-10pt}
\section{Reliability-Oriented Design Framework}
\label{sec:err}

{For the mission-critical applications, there are generally limited but latency-critical information bits for each user, e.g., state update from data sensing,  to be transmitted in the given slot. %
Usually, these small data packets have the highest priority, i.e., should be immediately transmitted without waiting in the queue.} 
And the  critical latency requirement does not allow retransmissions even when the transmissions fail.
{Hence, for such a queue-free scenario, the random impact of the queuing delay is negligible, and transmission delay is dominant.} 
{In other words, we can guarantee the critical delay requirements by appropriately choosing the blocklength.} 
Hence,  the other physical-layer QoS performance metric, i.e., the reliability, of such one-shot transmission becomes the key concern of the corresponding system design. 

In this section, we address the above issue  and provide a reliability-oriented framework design for such a scenario. 
In particular, we first characterize the error probability   for each user by taking into account the impact of both FBL and NOMA. 
Next, we formulate the optimization problem and transform it from the time domain into the channel state domain. Then, we decompose the transferred problem into several solvable sub-problems. 
{Finally, we provide the optimal solution   by investigating the joint convexity of error probability and characterizing the convexity of sub-problems with variable substitutions.} %
\subsection{Error Probability Characterization  and Problem Formulation} 
Recall that within any slot $n$, uplink NOMA is carried out between server and the corresponding user pair. Then, according to~\eqref{eq:error_tau},
the error probability of  decoding a data packet from user~2  is expressed as
\begin{equation}
     \varepsilon_{2|1,n}={\mathcal{P}}(\gamma_{2|1,n},d,m_n).
\end{equation}
If SIC succeeds, the decoding error probability of user~1 is given by
\begin{equation}
   \varepsilon _{1|1,n}={\mathcal{P}}(\gamma _{1|1,n},d,m_n).
\end{equation}
In the mean time, if SIC fails, user 1 has to decode its own signal with interference, resulting in an error probability of
    $\varepsilon_{1|2,n}={\mathcal{P}}(\gamma_{1|2,n},d,m_n)$.
Recall that $\gamma_{1|2,n}$ is generally less than $\gamma_{2|1,n}$ and also significantly lower than $\gamma_{1|1,n}$. Therefore, the decoding without SIC is unlikely to succeed, i.e., the decoding error probability $\varepsilon_{1|2,n}\approx 1$.
As a result, the overall decoding error probability for user 1 can be written as:
\vspace{-.3cm}
\begin{equation}\vspace{-.2cm}
\begin{split}
    \label{eq:err_1}
    \varepsilon_{1,n}&=(1-\varepsilon_{2|1,n})\varepsilon_{1|1,n}+\varepsilon_{2|1,n}\varepsilon_{1|2,n} 
    \approx \varepsilon_{2|1,n}+ \varepsilon_{1|1,n}.
\end{split}
\end{equation}
And the error probability of decoding the signal of user 2 is straightforward, i.e.,
\vspace{-.2cm}
\begin{equation}\vspace{-.2cm}
    \label{eq:err_2_n}
    \varepsilon_{2,n}
    =\varepsilon_{2|1,n}={\mathcal{P}}(\gamma_{2|1,n},d,m_n).\vspace{-5pt}
\end{equation}
We aim at minimizing the (expected) maximal error probability among the users in each time frame $  \mathbb{E}_t\big[\max\limits_{i,n}\{\varepsilon_{i,n}\}\big]$ by optimally allocating the blocklength of each pair $\mathbf{M}=\{\mathbf{m}(t)|t=1,\dots,T\}$ and transmit power $\mathbf{P}=\{\mathbf{p}(t)|t=1,\dots,T\}$ at any time frame index $t$ while fulfilling the energy consumption budget constraint as shown in Fig~\ref{fig:framework}, i.e.,
\begin{equation}
\sum^N_{n=1}m_n(t)(p_{1,n}(t)+p_{2,n}(t))\leq E_{\max}, \forall t\in\mathcal{T}.
\end{equation} 
{We assume that the transmit power and blocklength can be continuously allocated.}
{In addition, to ensure %
the transmission quality and prevent wasting of radio resources, we can construct a feasible set of $\mathbf{M}$ and $\mathbf{P}$. In particular, on one hand, each user requires the channel to be sufficiently good to satisfy the minimal condition $\gamma_{1|1,n}\geq\gamma_{2|1,n}\geq \gamma_{\rm th}\geq 0$ dB, i.e., $z_{2,n}\geq z_{1,n}\geq z_{\min}\geq  \frac{\sigma^2M}{E_{\max}}$. However, since the channel is random, it is possible that $z_{1,n}\leq z_{\min}$. In such cases, we should allocate no power and no blocklength for that user pair in the corresponding frame, otherwise it may potentially lead to an unfair resource allocation. On the other hand, transmissions with a coding rate greater than Shannon capacity are non-preferred, since it always results in an error probability greater than 0.5. Therefore, we should at least assign $m_n\geq \frac{d}{\log_2(\gamma_{\rm th}+1)}$ to the user pair $n$ to prevent the waste of radio resources. Therefore, we have the feasible set for transmit power allocation $\mathbf{\Omega}_P=\{\mathbf{\Omega}^p_{t}\}^T$ and blocklength allocation $\mathbf{\Omega}_M=\{\mathbf{\Omega}^m_{t}\}^T$, where $\mathbf{\Omega}^p=\{\Omega^p_{n}\}^N$ and $\mathbf{\Omega}^m=\{\Omega^m_{n}\}^N$ is the corresponding allocation in the time frame $t$.
Then, we have the following feasible sets:
\vspace{-5pt}
\begin{equation}
\label{eq:omega_p}
    \Omega^p_{n}=
    \begin{cases}
        p_{2,n}/p_{1,n}\geq \frac{z_{1,n}\gamma_{\rm th}}{z_{2,n}},~p_{1,n}\geq \frac{z_{1,n}\gamma_{\rm th}}{\sigma^2} & \text{if $z_{1,n}\geq z_{\min}$,}\\
        p_{2,n}=p_{1,n}=0, & \text{if $z_{1,n}< z_{\min}$,}
    \end{cases}\vspace{-5pt}
\end{equation}
and\vspace{-5pt}
\begin{equation}
\label{eq:omega_m}
    \Omega^m_{n}=
    \begin{cases}
        m_n\geq \frac{d}{\log_2(\gamma_{\rm th}+1)}, & \text{if $z_{1,n}\geq z_{\min}$,}\\
        m_n=0, & \text{if $z_{1,n}< z_{\min}$.}
    \end{cases}\vspace{-5pt}
\end{equation}
{In addition, to adopt the feasible set, we also modify the error probability with an indicator function as:\vspace{-5pt}
\begin{equation}
    \hat{\varepsilon}_{i,n}=\mathbbm{1}_{z< z_{\min}}(z_{i,n})\varepsilon_{i,n},\vspace{-5pt}
\end{equation}
where $\mathbbm{1}_{z<z_{\min}}(\cdot)$ is the indicator function with condition $z< z_{\min}$.} {In this way, the users, whose channel gains do not satisfy the conditions, also do not influence the value of the maximization $\max_{i,n}\{ \varepsilon_{i,n}\}$ and the number of time slot.}} 
{Moreover, it ensures the accuracy of the approximation we introduced in (2).}
{Then, the corresponding optimization problem can be written as follows:%
\vspace{-10pt}
\begin{mini!}[2]
{_{\mathbf{M}\in\mathbf{\Omega}_M,\mathbf{P}\in\mathbf{\Omega}_P}}{ \mathbb{E}_t\big[\max\limits_{i,n}\{\hat{\varepsilon}_{i,n}\}\big]}
{\label{problem:err_time}}{}\vspace{-20pt}
\addConstraint{\sum^N_{n=1}m_n(t)(p_{1,n}(t)+p_{2,n}(t))\leq E_{\max},\ \forall t\in \mathcal{T}}
\vspace{-10pt}
\label{con:energy_err_t}
\addConstraint{\sum^N_{n=1}m_n(t)\leq M, \forall t\in \mathcal{T}}
\label{con:blocklength_err_t}
\addConstraint{p_{1,n}(t)\leq P_{\max},~p_{2,n}(t)\leq P_{\max}, \ \forall n\in \mathcal{N} ,\ \forall t\in \mathcal{T},}
\label{con:power_err_t}
\vspace{-10pt}
\end{mini!}
where constraint~\eqref{con:blocklength_err_t} indicates that the allocated blocklengths of all TDMA slots should not exceed the maximal available blocklength $M$. Constraint~\eqref{con:power_err_t} restricts the maximal available power for NOMA power allocation. 
However, Problem~\eqref{problem:err_time}  belongs to  dynamic programming, which   is generally challenging to be solved analytically. 
In addition, the joint convexity of the FBL error probability has been recently characterized in OMA networks~\cite{joint_letter}. However, proving the convexity   of Problem~\eqref{problem:err_time} in a NOMA scenario (where power allocation  influences both the numerator and denominator of the SINR simultaneously) is still an open problem. 
In the next subsection, we rigorously prove this convexity and characterize the optimal solution to problem~\eqref{problem:err_time}. 
\vspace{-10pt}
\subsection{Optimal Solution to~\eqref{problem:err_time}}
 {To tackle this issue, we transfer the original problem from the time domain into the channel state domain. In particular, we consider a channel state combination $\mathbf{z}(\tau)=\{z_{1,1}(\tau), \dots,z_{2,N}(\tau)\}$, 
where $\tau\in\mathcal L=\{1,\dots,L\}$ is the index of possible channel realization. Then, we have:\vspace{-10pt}
\begin{equation}
\label{eq:time_to_state}
       \mathbb{E}_t[\varepsilon_{i,n}]
                ={\frac{1}{T}}\sum^T_{t=1}
                    \varepsilon_{i,n}(t) =\mathbb{E}_z[\varepsilon_{i,n}]=\int_{\mathbf{z}}\varepsilon_{i,n}(\tau) f_\mathbf{Z}(\mathbf{z}(\tau))d \mathbf{z}\approx \sum^L_{\tau=1}\varepsilon_{i,n}[\tau] f_\mathbf{Z}(\mathbf{z}[\tau]){\Delta_L}.\vspace{-10pt} 
\end{equation}
}{Recall that we let the user with strong channel gain to be the strong user. Then, $f_{\mathbf{Z}}(\cdot)$ is the joint probability density function (PDF) of sorted channel realization and $\Delta_L$ denotes the resolution for considering $L$ combinations, such that $\sum_{\tau=1}^Lf_\mathbf{Z}(\mathbf{z}[\tau])\Delta_L=1$.}
{Moreover $[\tau]$ indicates that the channel state is discrete at frame $\tau$.}} {Clearly, the approximation becomes accurate as $L\to \infty$. Based on the ergodicity of the channel states, we can replace time frame index $t$ with state frame index $\tau$ in~\eqref{problem:err_time}, resulting in the following equivalent optimization problem:  } 
\begin{mini!}[2]
{_{\mathbf{M}\in\mathbf{\Omega}_M,\mathbf{P}\in\mathbf{\Omega}_P}}{ \mathbb{E}_z\big[\max\limits_{i,n}\{\hat{\varepsilon}_{i,n}\}\big]}
{\label{problem:err_original}}{}
\addConstraint{\vspace{-20pt}\sum^N_{n=1}m_n[\tau]\leq M, \forall \tau \in\mathcal{L}}\vspace{-20pt}
\label{con:blocklength_err_tau}
\addConstraint{\sum^N_{n=1}m_n[\tau](p_{1,n}[\tau]+p_{2,n}[\tau])\leq E_{\max},\ \forall \tau \in\mathcal{L}}\vspace{-10pt}
\label{con:energy_err_tau}
\addConstraint{p_{1,n}[\tau]\leq P_{\max},~p_{2,n}[\tau]\leq P_{\max}, \ \forall n\in \mathcal{N} ,\ \forall \tau \in\mathcal{L}}.
\vspace{-5pt}
\end{mini!}
{
Note that the total available blocklength $M$ is constant in any frame and the energy budget cannot be carried over to the next frame\footnote{{It should be pointed out that the problem is still solvable with the proposed approaches via Lagrange dual method if the energy budget can be carried over. The details is presented in Section IV.}}. Moreover, according to~\eqref{eq:time_to_state}, $\mathbb{E}_t[\hat{\varepsilon}_{i,n}]$ is a linear combination of $\hat{\varepsilon}_{i,n}[\tau]$. In other words, minimizing $\mathbb{E}_t[\hat{\varepsilon}_{i,n}]$ is to minimizing $\hat{\varepsilon}_{i,n}[\tau]$ in each frame $\tau$. Then, we can decompose the problem into $L$ independent sub-problems with the feasible sets. In any arbitrary state $\mathbf{z}[\tau]$, the sub-problem can be written as:}
\begin{mini!}[2]
{_{\mathbf{m}\in\mathbf{\Omega}_m,\mathbf{p}\in\mathbf{\Omega}_p}}{\max\limits_{i,n}\{ \hat{\varepsilon}_{i,n}[\tau]\}}
{\label{problem:err_tau}}{}\vspace{-10pt}
\addConstraint{\sum^N_{n=1}m_n\leq M}\vspace{-5pt} %
\label{con:blocklength}
\addConstraint{\sum^N_{n=1}m_n(p_{1,n}+p_{2,n})\leq E_{\max}}
\label{con:energy}\vspace{-5pt}
\addConstraint{p_{1,n}\leq P_{\max},~p_{2,n}\leq P_{\max}, \ \forall n\in \mathcal{N}}\vspace{-5pt}
\label{con:power}
\addConstraint{\mathbf{z}=\mathbf{z}[\tau].}\vspace{-5pt}
\end{mini!}
Obviously, both the objective function and  constraint~\eqref{con:energy} are non-convex. {To tackle this issue, we introduce a variable substitution $a_n=\sqrt[3]{m_n}$, $b_{1,n}=\frac{1}{p_{1,n}}$, and $b_{2,n}=\frac{p_{1,n}}{p_{2,n}}$.} 
Then,  Problem~\eqref{problem:err_tau} can be further transformed as follows:\vspace{-10pt}
\begin{mini!}[2]
{_{\mathbf{a}\in\mathbf{\Omega}_m,\mathbf{b}\in\mathbf{\Omega}_p}}{\max\limits_{i,n}\{ \hat{\varepsilon}_{i,n}[\tau]\}}
{\label{problem:err_tau_ab}}{}\vspace{-10pt}
\addConstraint{\sum^N_{n=1}a^3_n\leq M}
\label{con:blocklength_err_mp}\vspace{-5pt}
\addConstraint{\sum^N_{n=1}a^3_n\left(\frac{1}{b_{1,n}b_{2,n}}+\frac{1}{b_{1,n}}\right)\leq E_{\max}}
\label{con:energy_err_mp}\vspace{-5pt}
\addConstraint{\frac{1}{b_{1,n}b_{2,n}}\leq P_{\max},~\frac{1}{b_{1,n}}\leq P_{\max}, \ \forall n\in \mathcal{N}}\vspace{-5pt}
\label{con:power_err_mp}
\addConstraint{\mathbf{z}=\mathbf{z}[\tau],}\vspace{-5pt}
\end{mini!}
where the variables $\mathbf a$ and $\mathbf b$ are ,respectively, the vectors including all $a_n$ and $b_{1,n}$, $b_{2,n}$. 
Then, to solve Problem~\eqref{problem:err_tau_ab}, we have following Lemma characterizing the joint convexity.\vspace{-5pt}
\begin{theorem}
\label{lemma:err_convex}
Problem~\eqref{problem:err_tau_ab} is a convex problem.
\vspace{-5pt}
\end{theorem}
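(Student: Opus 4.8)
The plan is to treat the objective, the constraints and the feasible domain of \eqref{problem:err_tau_ab} separately, reducing the whole statement to a single normalized convexity claim for the FBL error function. Since a pointwise maximum of convex functions is convex, it suffices to show that every $\hat\varepsilon_{i,n}$ is jointly convex in $(\mathbf a,\mathbf b)$. For a fixed realization $\mathbf z[\tau]$ the indicator $\mathbbm{1}_{z<z_{\min}}(z_{i,n})$ is a constant in $\{0,1\}$, so $\hat\varepsilon_{i,n}$ is either identically zero (trivially convex) or equals $\varepsilon_{i,n}$. By \eqref{eq:err_1} we have $\varepsilon_{1,n}\approx\varepsilon_{2|1,n}+\varepsilon_{1|1,n}$, a sum, so it is enough to prove that $\varepsilon_{2|1,n}$ and $\varepsilon_{1|1,n}$ are jointly convex. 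Inserting $a_n=\sqrt[3]{m_n}$, $b_{1,n}=1/p_{1,n}$, $b_{2,n}=p_{1,n}/p_{2,n}$ into \eqref{eq:snr_app} and into $\gamma_{1|1,n}$ gives $\gamma_{2|1,n}=\tfrac{z_{2,n}}{z_{1,n}}\tfrac{1}{b_{2,n}}$ and $\gamma_{1|1,n}=\tfrac{z_{1,n}}{\sigma^{2}}\tfrac{1}{b_{1,n}}$, together with $m_n=a_n^3$; the interference-dominant approximation in \eqref{eq:snr_app} is precisely what makes $\gamma_{2|1,n}$ a function of $b_{2,n}$ alone. Hence both error terms are instances of one function $F(a,b)=\mathcal P(c/b,d,a^3)$ with a channel-dependent constant $c>0$, and $\varepsilon_{1,n}$ is a sum of two such instances (in the variable pairs $(a_n,b_{2,n})$ and $(a_n,b_{1,n})$); it therefore suffices to prove that $F$ is jointly convex on the relevant domain.

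By \eqref{eq:error_tau}, $F(a,b)=Q\big(g(a,b)\big)$ with $g(a,b)=\ln 2\cdot\sqrt{a^{3}/V(c/b)}\,\big(\mathcal C(c/b)-d/a^{3}\big)$. On the transformed feasible sets coming from \eqref{eq:omega_p}--\eqref{eq:omega_m} one has $c/b=\gamma\ge\gamma_{\rm th}$ and $a^{3}=m\ge d/\log_2(1+\gamma_{\rm th})$, so $\mathcal C(\gamma)\ge d/m$ and therefore $g\ge 0$; this is exactly why the minimal-SINR and minimal-blocklength restrictions are imposed. Since $Q$ is non-increasing and convex on $[0,\infty)$ (indeed $Q''(x)=\tfrac{x}{\sqrt{2\pi}}e^{-x^{2}/2}\ge 0$ there), the chain rule yields $\nabla^{2}F=\tfrac{1}{\sqrt{2\pi}}e^{-g^{2}/2}\,\big(g\,\nabla g\,\nabla g^{\!\top}-\nabla^{2}g\big)$, so on the interior of the domain $F$ is convex if and only if $\nabla^{2}g\preceq g\,\nabla g\,\nabla g^{\!\top}$, and convexity on the closure follows by continuity. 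The plan is to verify this $2\times2$ matrix inequality directly: compute the entries of $\nabla g$ and $\nabla^{2}g$, observe that $\partial_{a}g>0$ everywhere (so the rank-one term $g\,\nabla g\,\nabla g^{\!\top}$ reinforces exactly the direction along which $\partial_{a}^{2}g$ can become positive), substitute $\gamma=c/b$ (a convex decreasing map, with $g$ increasing in $\gamma$), and reduce the semidefiniteness test to checking the leading entry and the determinant of the residual matrix, using the feasibility bounds $\gamma\ge\gamma_{\rm th}$ and $m\ge d/\log_2(1+\gamma_{\rm th})$ to pin down the signs. This extends the OMA-type statement of \cite{joint_letter} (where the SINR is proportional to the power); the new element is that, thanks to the substitution $b_{2,n}=p_{1,n}/p_{2,n}$, the NOMA SINR $\gamma_{2|1,n}$ also has the form $c/b$, so the same argument applies to the NOMA decoding stage even though $p_{1,n}$ originally appears in both the numerator and the denominator of \eqref{eq:snr_app}.

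It remains to deal with the constraints and the domain. The blocklength budget \eqref{con:blocklength_err_mp}, $\sum_{n}a_{n}^{3}\le M$, is convex because $a\mapsto a^{3}$ is convex on $a\ge0$. The energy budget \eqref{con:energy_err_mp} is a sum over $n$ of the monomials $a_{n}^{3}b_{1,n}^{-1}b_{2,n}^{-1}$ and $a_{n}^{3}b_{1,n}^{-1}$; a monomial on the positive orthant with one positive exponent $\alpha_{0}$ and all other exponents negative is convex whenever $\alpha_{0}\ge1+\sum(\text{absolute values of the negative exponents})$, which here reads $3\ge1+1$ and $3\ge1$, so the left-hand side of \eqref{con:energy_err_mp} is convex. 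The power constraints \eqref{con:power_err_mp} are sublevel sets of the monomials $b_{1,n}^{-1}b_{2,n}^{-1}$ and $b_{1,n}^{-1}$, which have nonpositive exponents and are hence convex. Finally, the transformed feasible sets $\Omega^{m}_{n}$ and $\Omega^{p}_{n}$ read, for $z_{1,n}\ge z_{\min}$, $a_{n}\ge(d/\log_2(1+\gamma_{\rm th}))^{1/3}$, $b_{2,n}\le z_{2,n}/(z_{1,n}\gamma_{\rm th})$ and $b_{1,n}\le\sigma^{2}/(z_{1,n}\gamma_{\rm th})$, and reduce to $a_{n}=0$ (with the corresponding powers zeroed) when $z_{1,n}<z_{\min}$; each is an intersection of half-spaces, hence convex. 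Collecting these facts, \eqref{problem:err_tau_ab} minimizes a convex function over a convex set, which is the claim.

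The main obstacle is the matrix inequality $\nabla^{2}g\preceq g\,\nabla g\,\nabla g^{\!\top}$ in the second step. Note that $g$ itself is \emph{not} concave in general: in the long-blocklength / high-SINR regime $\partial_{a}^{2}g>0$, so one genuinely needs the sharper semidefinite criterion rather than plain concavity of $g$, and the delicate point is to show that the feasibility constraints $\gamma\ge\gamma_{\rm th}$ and $m\ge d/\log_2(1+\gamma_{\rm th})$ are exactly what keeps the residual $2\times2$ matrix negative semidefinite over the whole domain. The validity of the interference-dominant approximation in \eqref{eq:snr_app}, discussed right after that equation, is also used implicitly, since it is what makes the reduction to the normalized function $F(a,b)=\mathcal P(c/b,d,a^{3})$ exact.
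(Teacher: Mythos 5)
Your overall architecture matches the paper's: reduce the max to per-component convexity, split $\varepsilon_{1,n}$ into the sum $\varepsilon_{2|1,n}+\varepsilon_{1|1,n}$, use the substitution $a_n=\sqrt[3]{m_n}$, $b_{1,n}=1/p_{1,n}$, $b_{2,n}=p_{1,n}/p_{2,n}$ so that each SINR becomes $c/b$ and the blocklength becomes $a^3$, prove joint convexity of each error term via its $2\times 2$ Hessian on the feasible set, and check the constraints separately. Two things you do are genuinely cleaner than the paper: (i) you observe that both error terms are instances of a single normalized function $F(a,b)=\mathcal{P}(c/b,d,a^3)$, so one computation covers both decoding stages, whereas the paper repeats the argument; (ii) you state the full positive-semidefiniteness criterion for $Q\circ g$, namely $\nabla^2 g\preceq g\,\nabla g\,\nabla g^{\top}$, to be verified via leading entry plus determinant — the paper only computes the two diagonal second derivatives and outsources the cross-term/determinant condition to an external lemma of \cite{joint_letter}. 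Your treatment of the constraints (convexity of $a_n^3$, the monomial criterion $\alpha_0\ge 1+\sum|\alpha_i|$ for $a_n^3 b_{1,n}^{-1}b_{2,n}^{-1}$ and $a_n^3 b_{1,n}^{-1}$, half-space form of the transformed feasible sets) is correct and, for the energy constraint, more careful than the paper's determinant-only check.

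However, there is a genuine gap: the matrix inequality $\nabla^2 g\preceq g\,\nabla g\,\nabla g^{\top}$ is exactly where all the analytical difficulty of this lemma lives, and you announce it as a plan rather than carry it out. The paper's Appendix A is devoted almost entirely to this step — explicit first- and second-order derivatives of $\omega_{2|1,n}$ and $\varepsilon_{2|1,n}$ with respect to $a_n$ and $b_{2,n}$, lower-bounded term by term using the feasibility conditions $\gamma\ge\gamma_{\rm th}\ge 1$ and a minimum blocklength — and the signs are not obvious: as you yourself note, $g$ is not concave, so the rank-one correction $g\,\nabla g\,\nabla g^{\top}$ must be shown to dominate $\nabla^2 g$ quantitatively, not just qualitatively. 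Until the entries of $\nabla g$ and $\nabla^2 g$ are written out and the leading-entry and determinant conditions of the residual matrix are verified against the bounds $\gamma\ge\gamma_{\rm th}$ and $m\ge d/\log_2(1+\gamma_{\rm th})$, the claim that each $\hat\varepsilon_{i,n}$ is jointly convex — and hence the lemma itself — remains unproven. Everything you have actually executed is correct; the load-bearing computation is simply missing.
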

\vspace{-5pt}
\begin{proof}
Appendix A.
\end{proof}
\textbf{Remark 1:}
\emph{
{The characterized joint convexity of Lemma 1 can also be applied for pure TDMA or OMA scheme by either fixing the transmit power $\mathbf{p}$ or blocklength $\mathbf{m}$ without the variable substitution, i.e., $\varepsilon_{i,n}$ is convex in either $\mathbf{p}$ or $\mathbf{m}$. }
}

Based on Lemma~\ref{lemma:err_convex}, Problem~\eqref{problem:err_tau_ab} can be solved efficiently via any standard convex optimization tools. Therefore, the optimal solutions of original Problem~\eqref{problem:err_original} can be obtained in the following approach: 
{In any time frame $t$, we let $\mathbf{z}[\tau]=\mathbf{z}(t)$ and solve Problem~\eqref{problem:err_tau_ab}, resulting in the optimal solutions $\mathbf{a}^*[\tau]$ and $\mathbf{b}^*[\tau]$. Subsequently, we obtain the optimal blocklength and power allocation by reversing the variable substitution with $m^*_n(t)=\left(a^*_n[\tau]\right)^3$, $p^*_{1,n}(t)=\frac{1}{b^*_{1,n}
[\tau]}$ and $p^*_{2,n}(t)=\frac{1}{b^*_{1,n}[\tau]b^*_{2,n}[\tau]}$, $\forall n\in\mathcal{N}$. 
Although this approach requires solving Problem~\eqref{problem:err_tau_ab} for all possible $t$, we only need to solve for the current time frame $t$ based on any instantaneous channel realization $\mathbf{z}(t)$ in a practical system, with a low computational complexity of $\mathcal{O}(4N^2)$.} 

{Recall that the solutions are obtained based on SINR approximation in~\eqref{eq:snr_app}, which may not be accurate in every time frame. Therefore, the optimal results should be recalculated based on the exact SINR expression with optimal solutions $\mathbf{m}^*$ and $\mathbf{p}^*$, while the optimal results based on the approximation can be considered as a performance lower bound. {More importantly, we can show that the results obtained via our approach can achieve a nearly global optimum, as observed in the numerical results in Section V. We also provide a summary of the proposed algorithm in Algorithm 1.}}

{Recall that there are two users in each time slots. However, it is also possible to schedule multiple users to operate the uplink NOMA, where our proposed algorithm can be extended. In particular, we can leverage the block coordinate descent (BCD) method~\cite{BCD} to iteratively solve Problem (19) by fixing one of the transmit power of those users. However, the NOMA performs better with two users in practice, while the performance of such schemes with more than two users are heavily influenced by the interference and error propagation~\cite{SIC_NOMA}. Therefore, in the rest of the paper, we focus on the two-user case.}

\begin{algorithm}[!t]
\scriptsize
\caption{Algorithm to solve (16)}\label{alg:err}
\begin{algorithmic}[1]
\State \textbf{Initial:} $\mathbf{z}[\tau]=\mathbf{z}(t)$ 
\For {user pair $n=1,\dots,N$}
\If {$z_{i,n}<z_{\min}$}
    $m^*_{n}=0,p^*_{1,n}=0,p^*_{2,n}=0$
\EndIf
\EndFor 
\Ensure (11), (12), (18b), (18c), and (18d)
\State Solve (18) according to Lemma 1 and get $(\mathbf{a}^*[\tau],\mathbf{b}^*[\tau])$
\State Reverse the variable substitution with $m_n^*[\tau] =(a_n^*[\tau])^3$ $p_{1,n}^*[\tau] = 1/b_{1,n}^*[\tau]$ and $p_{2,n}^*[\tau] = 1/(b_{1,n}^*[\tau]b_{1,n}^*[\tau])$.
\State Calculate the exact SINR with $\gamma_{2|1,n}[\tau] = \frac{z_{2,n}[\tau]p^*_{2,n}[\tau]}{z_{1,n}[\tau]p^*_{1,n}[\tau]+\sigma^2_n}$, $\forall n\in\mathcal{N}$.
\State  Reconstruct the optimal results with $\pmb{\varepsilon}^*[\tau] =  {\hat{\pmb{\varepsilon}}}(\mathbf{m}^*_{i,n}[\tau],\mathbf{p}^*[\tau],\mathbf{z}[\tau])$.
\end{algorithmic}
\end{algorithm}

\section{Effective Capacity-Oriented Framework Design}
\label{sec:EC}

{In the previous section, the transmitted data packet in each frame for each user is deterministic.} 
It should be pointed out that  in certain practical applications 
data arrives continuously and stays in the queue buffer until being transmitted. 
For instance, these applications include video streaming in virtual/augmented/mixed reality. 
{In such queue-influenced scenarios, the applications are not only delay-sensitive, but also heavily influenced by the queuing behavior.}  
Therefore, a pure physical-layer  metric   is not sufficient for characterizing the performance of such systems. 
Hence, in this section, we are motivated to adopt the well-known link-layer QoS performance model, namely, effective capacity. 
Note that in our scenario, the system also takes the impact of FBL codes into account. In other words,  the applied effective capacity model indicates the successfully transmitted and decoded data throughput per channel use in the FBL regime, whose queue delay is satisfying specific statistical QoS requirements. %
In the following subsections,  we provide a design framework maximizing the effective capacity while guaranteeing the targeted QoS requirements. 
In particular, we first derive the effective capacity of the considered network in the FBL regime. Subsequently, we state our optimization problem and leverage the Lagrangian dual method to transfer the problem into several solvable subproblems. Finally, we provide the optimal solution.

\vspace{-10pt}
\subsection{Effective Capacity Characterization  and Problem Formulation}
Consider a  target set of QoS requirements \{$\theta_{i,n}$,~$\bar \varepsilon_{i,n}$\}, which are the target
error probability   and  target QoS exponent   for each user $i$ in the $n^{\rm th}$ user pair. 
Recall that the channels are assumed to experience block-fading. Therefore, the transmission error is independent between time frames. 
Then, the effective capacity also influences by those transmission errors. 
{In particular, the discrete service rate in this scenario $s_{i,n}(t)$  becomes the successfully transmitted  bits at time frame $t$. 
In particular, for any user, if the transmission succeeds, i.e., with probability of $(1-\bar{\varepsilon}_{i,n})$, the service process $s_t$ at time frame $t$ is $m_nr_{i,n}$.} By substituting $s_t$ in~\eqref{eq:EC_original}, the effective capacity (in bits/frame) for user $i$ in that pair with FBL codes is given by~\cite{gursoy_EC}:
\vspace{-5pt}
\begin{equation}
    \label{eq:effective_capacity}
    R_{i,n}=%
    -\frac{1}{\theta_{i,n}}
            \ln \{
                    \mathbb{E}[e^{-\theta_{i,n}m_{n}r_{i,n}}
                    (1-\bar\varepsilon_{i,n})+\bar\varepsilon_{i,n}]
                \} ,  
\end{equation}
where $r_{i,n}$ is the coding rate. 
{Compared to the original expression in~\eqref{eq:EC_original},~\eqref{eq:effective_capacity} indicates the influence of both queue delay and decoding error probability.} {Next, let us investigate the coding rate in NOMA.} In particular, the coding rate for user 1 and user 2 in the $n^{\rm th}$ pair can be written as:
\vspace{-5pt}
\begin{equation}
    \label{eq:coding_rate_1}
    r_{1,n}\approx\mathcal{C}(\gamma_{1|1})-\sqrt{\frac{V(\gamma_{1|1})}{m_n}}Q^{-1}(\bar\varepsilon_{i,n}),
\end{equation}
and
\vspace{-5pt}
\begin{equation}
    \label{eq:coding_rate_2}
    r_{2,n}\approx\mathcal{C}(\gamma_{2|1})-\sqrt{\frac{V(\gamma_{2|1})}{m_n}}Q^{-1}(\bar\varepsilon_{i,n}),
\end{equation}
where we still assume $z_2\geq z_1$. 
According to~\cite{FBL_EC}, the  above approximations are accurate in the considered scenarios with reliable transmissions. 
We aim at maximizing the (normalized) sum effective capacity, i.e., $\frac{1}{M}\sum^N_{n=1}\sum^2_{i=1}R_{i,n}$ by optimally allocating the blocklength of each pair $\mathbf{M}=\{\mathbf{m}(t)|t=1,2,\dots\}$ and transmit power of each user $\mathbf{P}=\{\mathbf{p}(t)|t=1,2,\dots\}$ while guaranteeing  QoS conditions. Moreover, since the transmitted data size is not fixed, instead of the deterministic energy budget $E_{\max}$, we consider an average constraint with average energy budget $\bar{E}_{\max}$, i.e.,
\begin{equation}
    \mathbb{E}\Big[\sum^N_{n=1}m_n(t)(p_{1,n}(t)+p_{2,n}(t))\Big]\leq \bar{E}_{\max}.
\end{equation} 
Therefore, we can formulate following optimization problem:
\vspace{-10pt}
\begin{maxi!}[2]
{_{\mathbf{M}\in\mathbf{\Omega}_M,\mathbf{P}\in\mathbf{\Omega}_P}}{\frac{1}{M}\sum^N_{n=1}\sum^2_{i=1}R_{i,n}}
{\label{problem:effective_capacity_time}}{}
\addConstraint{\sum^N_{n=1}m_n(t)\leq M, \forall t\in \mathcal{T}}
\label{con:blocklength_EC_t}
\addConstraint{\mathbb{E}[\sum^N_{n=1}m_n(t)(p_{1,n}(t)+p_{2,n}(t))]\leq \bar{E}_{\max}}
\label{con:energy_EC_t}
\addConstraint{p_{1,n}(t)\leq P_{\max},~p_{2,n}(t)\leq P_{\max}, \ \forall n\in \mathcal{N} ,\ \forall t\in \mathcal{T}}
\label{con:power_EC_t}
\addConstraint{\gamma_{i|j,n}(t)\geq \gamma_{\rm th}, \ \forall n\in \mathcal{N},\ \forall (i,j)\in\{1,2\},\ \forall t\in \mathcal{T}.}
\label{con:snr_EC_t}
\end{maxi!}
Although Problem~\eqref{problem:effective_capacity_time} has a structure similar to that of Problem~\eqref{problem:err_time}, the methodology of solving~\eqref{problem:err_time} cannot be simply followed. Firstly, the average energy budget constraints cannot be decomposed into sub-constraint since the energy consumption of each frame depends on each other. Secondly, the proof of joint concavity for effective capacity is non-trivial, since it has a more complicated expression than error probability according to~\eqref{eq:effective_capacity}. Therefore, in what follows, we exploit Lagrange multipliers and  decompose the original problem into the corresponding partial dual problems. Then, we show that the strong duality holds and provide the optimal solutions via decomposing the dual problem into sub-problems. Finally, we solve the dual problem after characterizing the convexity of any sub-problem with the associated Lagrange multiplier.

\subsection{Optimal Solution to~\eqref{problem:effective_capacity_time}}
In particular, we also consider a channel state combination $\mathbf{z}[\tau]=\{z_{1,1}[\tau],\dots,z_{2,N}[\tau]\}$, for $\tau\in\mathcal L=\{1,\dots,L\}$, with $L\to \infty$. Then, the (expected) effective capacity over time is equivalent to the one averaged over channel states:
\vspace{-5pt}
\begin{equation}
       R_{i,n}=-\frac{1}{\theta_{i,n}}
            \ln \{
                    \mathbb{E}_t[e^{-\theta_{i,n}m_{n}r_{i,n}}
                    (1-\bar\varepsilon_{i,n})+\bar\varepsilon_{i,n}]
                \}
                =-\frac{1}{\theta_{i,n}}
            \ln \{
                    \mathbb{E}_\tau[e^{-\theta_{i,n}m_{n}r_{i,n}}
                    (1-\bar\varepsilon_{i,n})+\bar\varepsilon_{i,n}]
                \}.
\end{equation}
We can apply the same approach on the averaged energy budget constraint:
\vspace{-5pt}
\begin{equation}
\mathbb{E}_t\Big[\sum^N_{n=1}m_n(t)(p_{1,n}(t)+p_{2,n}(t))\Big]\leq \bar{E}_{\max}\iff \mathbb{E}_\tau\Big[\sum^N_{n=1}m_n[\tau](p_{1,n}[\tau]+p_{2,n}[\tau])\Big]\leq \bar{E}_{\max}.
\end{equation}

Then, we  replace time frame index $t$ with state frame index $\tau$ in~\eqref{problem:effective_capacity_time}, resulting in the following equivalent optimization problem:     
\begin{maxi!}[2]
{_{\mathbf{M}\in\mathbf{\Omega}_M,\mathbf{P}\in\mathbf{\Omega}_P}}{\frac{1}{M}\sum^N_{n=1}\sum^2_{i=1}R_{i,n}}
{\label{problem:effective_capacity_original}}{}
\addConstraint{\sum^N_{n=1}m_n[\tau]\leq M, \forall \tau \in\mathcal{L}}
\label{con:blocklength_EC_tau}
\addConstraint{\mathbb{E}_{\mathbf{z}}[\sum^N_{n=1}m_n[\tau](p_{1,n}[\tau]+p_{2,n}[\tau])]\leq E_{\max}}
\label{con:energy_EC_tau}
\addConstraint{p_{1,n}[\tau]\leq P_{\max},~p_{2,n}[\tau]\leq P_{\max}, \ \forall n\in \mathcal{N} ,\ \forall \tau \in\mathcal{L}}
\label{con:power_EC_tau}
\addConstraint{\gamma_{i|j,n}[\tau]\geq \gamma_{\rm th}, \ \forall n\in \mathcal{N},\ \forall (i,j)\in\{1,2\},\ \forall \tau \in\mathcal{L}.}
\label{con:snr_EC_tau}
\end{maxi!}
However, we cannot decompose the above problem directly due to the constraint~\eqref{con:energy_EC_tau}. Instead, we apply the Lagrange dual method to obtain the Lagrangian function:
\vspace{-5pt}
\begin{equation}
    \begin{split}
    L=&-\frac{1}{M}\sum^N_{n=1}\sum^2_{i=1}R_{i,n}+\lambda_E(\mathbb{E}_{\mathbf{z}}[\sum^N_{n=1}m_n[\tau](p_{1,n}[\tau]+p_{2,n}[\tau])]-E_{\max})%
    \end{split}
\end{equation}
where $\lambda_E$ is the Lagrange Multiplier for constraint~\eqref{con:energy_EC_tau}. With $L$, the corresponding dual problem is given by:
\begin{mini!}[2]
{\lambda_E\geq 0}{\underset{\mathbf{M},\mathbf{P}}{\inf} L(\mathbf{M},\mathbf{P},\lambda_E)}
{\label{problem:dual}}{}
\end{mini!}
 {Note that Problem~\eqref{problem:effective_capacity_original} is not necessarily convex, i.e.,  we cannot apply Slater's condition directly. To address this issue, we have the following lemma:}
\begin{theorem}
The time-sharing condition~\cite{dual} is satisfied for Problem~\eqref{problem:effective_capacity_original}, if it holds that $L\to\infty$.
\label{lemma:time_sharing}
\end{theorem}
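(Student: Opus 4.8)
The plan is to verify the time-sharing condition of~\cite{dual} directly, by exhibiting, for any two feasible average energy budgets $E_1,E_2>0$ and any convex weight $\nu\in[0,1]$, a feasible allocation whose sum effective capacity is at least $\nu R^\star(E_1)+(1-\nu)R^\star(E_2)$, where $R^\star(E_k)$ denotes the optimal value of~\eqref{problem:effective_capacity_original} under budget $E_k$. In the regime $L\to\infty$ the problem has exactly the ``infinitely many parallel resources'' structure of the multicarrier problem in~\cite{dual}, with the channel-state index $\tau\in\mathcal L$ playing the role of the subcarrier index, so the construction follows the same logic. Concretely, let $(\mathbf M^{(k)},\mathbf P^{(k)})$ attain $R^\star(E_k)$, $k\in\{1,2\}$, and build a candidate $(\mathbf M^{(3)},\mathbf P^{(3)})$ that, over the (asymptotically long) operation horizon, runs $(\mathbf M^{(1)},\mathbf P^{(1)})$ during a fraction $\nu$ of the frames and $(\mathbf M^{(2)},\mathbf P^{(2)})$ during the remaining fraction $1-\nu$, grouping these frames into two consecutive blocks whose lengths both diverge.

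Feasibility is then inherited frame by frame: in every frame the combined policy coincides with one of the two optimal allocations \emph{for the channel state realized in that frame}, so~\eqref{con:blocklength_EC_tau}, \eqref{con:power_EC_tau}, \eqref{con:snr_EC_tau} and the feasible sets $\mathbf\Omega_M,\mathbf\Omega_P$ all hold state-by-state. For the averaged constraint~\eqref{con:energy_EC_tau}, the long-run average energy of $(\mathbf M^{(3)},\mathbf P^{(3)})$ equals $\nu\,\mathbb E_{\mathbf z}[\sum_n m^{(1)}_n(p^{(1)}_{1,n}+p^{(1)}_{2,n})]+(1-\nu)\,\mathbb E_{\mathbf z}[\sum_n m^{(2)}_n(p^{(2)}_{1,n}+p^{(2)}_{2,n})]\le \nu E_1+(1-\nu)E_2$, so the combined policy is feasible for the convex-combined budget.

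For the objective I would use that the channel realizations are i.i.d.\ across frames, so the cumulative service $S[T]$ generated by $(\mathbf M^{(3)},\mathbf P^{(3)})$ splits into the sum of the two blocks' \emph{independent} cumulative services; passing $T\to\infty$, the asymptotic LMGF becomes additive with the block weights, $\Lambda^{(3)}_{s_{i,n}}(-\theta_{i,n})=\nu\,\Lambda^{(1)}_{s_{i,n}}(-\theta_{i,n})+(1-\nu)\,\Lambda^{(2)}_{s_{i,n}}(-\theta_{i,n})$, whence by~\eqref{eq:EC_original} every user obtains $R^{(3)}_{i,n}=\nu R^{(1)}_{i,n}+(1-\nu)R^{(2)}_{i,n}$. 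Summing over $(i,n)$ and normalizing by $M$ gives a feasible point of~\eqref{problem:effective_capacity_original} at budget $\nu E_1+(1-\nu)E_2$ whose value is exactly $\nu R^\star(E_1)+(1-\nu)R^\star(E_2)$, which establishes the time-sharing condition (indeed with equality); invoking~\cite{dual} then yields zero duality gap between~\eqref{problem:effective_capacity_original} and~\eqref{problem:dual}, justifying the dual decomposition used afterwards.

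The main obstacle — and the reason the argument must live in the $L\to\infty$ ergodic setting and must share over \emph{independent time blocks} rather than within a frame — is that the effective-capacity functional $R_{i,n}=-\tfrac{1}{\theta_{i,n}}\ln\mathbb E_\tau[e^{-\theta_{i,n}m_nr_{i,n}}(1-\bar\varepsilon_{i,n})+\bar\varepsilon_{i,n}]$ is \emph{not} additive over channel states: it is minus a logarithm of an expectation, so interleaving the two allocations state-by-state (or with a per-frame coin) would actually \emph{decrease} it by Jensen's inequality and would not deliver the required lower bound. Block time-sharing is what fixes the direction, since the large-deviation rate (the LMGF) \emph{is} additive over independent increments, so the two blocks' effective capacities combine linearly in the limit. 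I would therefore be careful to phrase the construction at the level of the long-horizon service process, using precisely the ergodicity already invoked when passing from~\eqref{problem:effective_capacity_time} to~\eqref{problem:effective_capacity_original}, rather than at the level of the per-state variables $m_n[\tau],p_{i,n}[\tau]$.
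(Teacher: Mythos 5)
Your proof follows essentially the same route as the paper's Appendix~B: block time-sharing of the two optimal frame-wise allocations over fractions $\nu$ and $1-\nu$ of a diverging horizon, with feasibility inherited frame by frame. Your version is in fact tighter in two places the paper glosses over --- you justify why the effective capacity combines linearly under block time-sharing via additivity of the LMGF over independent blocks (rather than simply asserting that $\nu L$ frames ``achieve $\nu R^*_{\mathrm{x}}$''), and you bound the resulting average energy by the convex combination $\nu E_1+(1-\nu)E_2$ required by the time-sharing condition, where the paper loosely writes $E^*_{\mathrm{x}}+E^*_{\mathrm{y}}$.
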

\begin{proof} Appendix~B.
\end{proof} 
{Therefore, the strong duality holds and we can obtain the optimal solutions by solving the dual problem~\cite{intro_NOMA_3,dual}.}
Then, for a given $\lambda_E$, which is independent from $\tau$, the dual problem can be decomposed into $L$ sub-problems with any state $\tau$, i.e.,
\begin{mini!}[2]
{_{\mathbf{m}\in\mathbf{\Omega}_m,\mathbf{p}\in\mathbf{\Omega}_p}}{L[\tau]= -\frac{1}{M}\sum^N_{n=1}\sum^2_{i=1}R_{i,n}[\tau]+\lambda_E(\sum^N_{n=1}m_n[\tau](p_{1,n}[\tau]+p_{2,n}[\tau])-E_{\max})}
{\label{problem:sub}}{}
\addConstraint{\sum^N_{n=1}m_n[\tau]\leq M}
\label{con:blocklength_tau}
\addConstraint{p_{1,n}[\tau]\leq P_{\max}, p_{2,n}[\tau]\leq P_{\max}}
\label{con:power_tau}
\addConstraint{\mathbf{z}=\mathbf{z}[\tau],}
\end{mini!}
where $\mathbf{\Omega}_{m}$ and $\mathbf{\Omega}_{p}$ are the feasible sets defined in~\eqref{eq:omega_m} and~\eqref{eq:omega_p}, respectively. It should be emphasized that the objective function itself in the dual problem is not decomposed, which is different from the approach in the previous section. Clearly, the dual problem consists of two components: the effective capacity of each user $R_{i,n}$ at the state $\tau$ and the energy budget constraint associated with $\lambda_{\rm E}$.
Instead of dealing with $R_{i,n}$ as a function of blocklength and transmit power directly, we characterize the transmission rate in terms of blocklength and S(I)NR. Let $\boldsymbol{\gamma}=\{\gamma_{2|1,n},\gamma_{1|1,n}|n\in\mathcal{N}\}$ denote the S(I)NR matrix associated with the feasible transmit power $\mathbf{p}$. Although the convexity of $r_{i,n}$ with respect to a single factor, i.e., with respect to either $\mathbf{m}$ or $\boldsymbol{\gamma}$ has already been proven~\cite{Hu_TWC_2015}, the feature of joint convexity has not been characterized yet. To address this, we establish the following lemma:
\begin{theorem}
\label{lemma:r_convex}
Negative transmission rate $-r_{i,n}$ is jointly convex in $\mathbf{m}$ and $\boldsymbol{\gamma}$ within the feasible set of Problem~\eqref{problem:sub}.
\end{theorem}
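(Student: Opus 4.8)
The plan is to establish positive semidefiniteness of the $2\times2$ Hessian of $-r_{i,n}$ with respect to the pair $(m_n,\gamma)$, where $\gamma$ denotes the relevant S(I)NR ($\gamma_{1|1,n}$ for user~1, $\gamma_{2|1,n}$ for user~2). From \eqref{eq:coding_rate_1} one writes $-r_{i,n}(m_n,\gamma)=-\log_2(1+\gamma)+Q^{-1}(\bar\varepsilon_{i,n})\sqrt{V(\gamma)}\,m_n^{-1/2}$ with $V(\gamma)=1-(1+\gamma)^{-2}$, noting $Q^{-1}(\bar\varepsilon_{i,n})>0$ because the target error probabilities lie below $1/2$. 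With the abbreviations $u=1+\gamma$ and $w(u)=\sqrt{V(\gamma)}=\sqrt{u^2-1}/u$, a short computation yields $w'(u)=u^{-2}(u^2-1)^{-1/2}>0$ and $w''(u)=-(3u^2-2)\,u^{-3}(u^2-1)^{-3/2}<0$; thus the dispersion factor $\sqrt{V}$ is concave and increasing in $\gamma$ on the feasible set, where $\gamma\ge\gamma_{\rm th}$, i.e.\ $u\ge 1+\gamma_{\rm th}$.

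Next I would evaluate the three Hessian entries: $\partial^2_{m_n}(-r_{i,n})=\tfrac34 Q^{-1}(\bar\varepsilon_{i,n})\,w\,m_n^{-5/2}$, $\partial^2_{m_n\gamma}(-r_{i,n})=-\tfrac12 Q^{-1}(\bar\varepsilon_{i,n})\,w'\,m_n^{-3/2}$, and $\partial^2_{\gamma}(-r_{i,n})=\tfrac1{u^2\ln 2}+Q^{-1}(\bar\varepsilon_{i,n})\,w''\,m_n^{-1/2}$. The first is strictly positive, so the Hessian is PSD if and only if its determinant is nonnegative (nonnegativity of $\partial^2_{\gamma}$ then follows automatically, since for $\begin{psmallmatrix}a&b\\ b&c\end{psmallmatrix}$ with $a>0$ one has $c\ge b^2/a\ge 0$ once $ac-b^2\ge 0$). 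Clearing the common positive factors, the determinant condition collapses to the one-variable inequality
\[
\sqrt{m_n}\;\ge\; \frac{Q^{-1}(\bar\varepsilon_{i,n})\,\ln 2\,(9u^2-5)}{3u\,(u^2-1)^{3/2}}\;=:\;g(u),
\]
where the positive curvature $\tfrac1{u^2\ln2}$ inherited from $-\log_2(1+\gamma)$ is exactly what offsets the negative curvature $w''<0$ of the dispersion term; without it the claim is false, which is why the feasible-set restriction is needed.

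To close the argument I would show that $g$ is monotonically decreasing on $u>1$ (a short check gives $g'(u)<0$ there, with $g\to\infty$ as $u\to1^+$ and $g\to0$ as $u\to\infty$), so over the feasible S(I)NR range the right-hand side is largest at $u=1+\gamma_{\rm th}$; it then suffices to verify $m_n\ge g(1+\gamma_{\rm th})^2$ on the feasible set, which is implied by the blocklength floor $m_n\ge d/\log_2(1+\gamma_{\rm th})$ built into $\Omega^m_n$ (equivalently, by requiring a positive FBL coding rate) under the mild assumption that the data size $d$ is not too small — a condition comfortably satisfied for any practical payload. The main obstacle is precisely this determinant step: simplifying the Hessian determinant into the clean form above, proving monotonicity of $g$, and checking that the feasible-set blocklength floor dominates $g$ at the worst-case S(I)NR $u=1+\gamma_{\rm th}$; the competition between the concavity of $\sqrt{V(\gamma)}$ and the convexity of $-\mathcal{C}(\gamma)$ is what makes the sign of the determinant non-obvious and forces the domain restriction.
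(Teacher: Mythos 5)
Your proof is correct and follows essentially the same route as the paper's Appendix~C: both establish positive semidefiniteness of the per-user $2\times 2$ Hessian in $(m_n,\gamma)$ by checking the diagonal signs and the determinant, and both reduce the determinant condition to a lower bound on $\sqrt{m_n}$ involving $Q^{-1}(\bar{\varepsilon}_{i,n})$ evaluated at the worst-case S(I)NR $\gamma=\gamma_{\rm th}$. Your write-up is slightly tidier in that it derives the explicit threshold $g(u)$, proves its monotonicity, and ties the required blocklength floor back to the feasible set $\Omega^m_n$ --- a verification the paper leaves implicit in its final step $\frac{3\sqrt{3}}{4}-\frac{31}{24}m_n^{-1/2}Q^{-1}(\bar{\varepsilon}_{i,n})\geq 0$ (the only other discrepancy is an immaterial $\log_2 e$ normalization of the dispersion term).
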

\begin{proof}
Appendix~C.
\end{proof} 

Lemma~\ref{lemma:r_convex} can already be applied to solve the problem with transmission rate as the objective function. Moreover, with the analytical results of Lemma~\ref{lemma:r_convex}, we can further characterize the joint convexity of negative effective capacity $-R_{i,n}$, since $-R_{i,n}$ can be considered as a function of $m_n$ and $-r_{i,n}$. 
It should be pointed out that the vector composition rule for proving the joint convexity cannot be directly applied, since it requires $-r_{i,n}(m_n,\gamma_{i,n})$ to be non-increasing and concave. Therefore, we provide the following lemma by exploiting the sign of Hessian matrix of $-R_{i,n}$:\vspace{-10pt}
\begin{theorem}
\label{lemma:EC_convex}
Negative effective capacity $-R_{i,n}(m_n,r_{i,n})$, $\forall i,n$, is jointly convex in $\mathbf{m}$ and $\boldsymbol{\gamma}$.\vspace{-5pt}
\end{theorem}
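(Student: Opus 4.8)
The goal is to show that $-R_{i,n}$, viewed as a function of $(m_n,\gamma_{i,n})$ through the intermediate coding rate $r_{i,n}(m_n,\gamma_{i,n})$, is jointly convex on the feasible set of Problem~\eqref{problem:sub}. The natural route would be the vector-composition rule: $-R_{i,n}$ is a function of $m_n$ and $r_{i,n}$, Lemma~\ref{lemma:r_convex} already gives that $-r_{i,n}$ is jointly convex, and one would like to invoke the standard rule ``convex, non-increasing outer $\circ$ (affine, convex) inner.'' The obstacle — which the excerpt explicitly flags — is that the outer map $g(m_n,t):= -R_{i,n}$ expressed in $t=-r_{i,n}$ need not be monotone in $t$ in the right direction jointly with $m_n$, so the composition rule does not apply cleanly. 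Hence the plan is to bypass composition entirely and verify joint convexity \emph{directly} by showing the $2\times 2$ Hessian of $-R_{i,n}$ in $(m_n,\gamma_{i,n})$ is positive semidefinite.

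First I would set up notation: write $h := m_n r_{i,n}(m_n,\gamma_{i,n})$ for the (random, per-realization) service exponent argument and $\phi := \mathbb{E}[e^{-\theta h}(1-\bar\varepsilon)+\bar\varepsilon]$, so that $-R_{i,n}=\frac{1}{\theta}\ln\phi$. Differentiating, $\partial(-R_{i,n}) = \frac{1}{\theta}\frac{\partial\phi}{\phi}$, and the Hessian of $-R_{i,n}$ splits as $\frac{1}{\theta}\bigl(\frac{\nabla^2\phi}{\phi} - \frac{\nabla\phi\,\nabla\phi^\top}{\phi^2}\bigr)$. The second term is $-\frac{1}{\theta}\nabla(\ln\phi)\nabla(\ln\phi)^\top$, which is negative semidefinite, so this is \emph{not} automatically PSD — one genuinely needs structure. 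The key observation to exploit is that $\phi = \mathbb{E}[\psi]$ where $\psi(m_n,\gamma):=e^{-\theta m_n r_{i,n}}(1-\bar\varepsilon)+\bar\varepsilon$ is, for each channel realization, a convex function of $(m_n,\gamma)$: indeed $m_n r_{i,n}$ is jointly \emph{concave} (this is essentially the content behind Lemma~\ref{lemma:r_convex} once one notes $r_{i,n}=\mathcal C-\sqrt{V/m}\,Q^{-1}$, so $m r = m\mathcal C - \sqrt{mV}\,Q^{-1}$ and $-r$ convex $\Rightarrow$ $-mr$ relationship handled via the same Hessian computation), hence $e^{-\theta m_n r_{i,n}}$ is a composition of the convex decreasing $e^{-\theta(\cdot)}$ with the concave $m_n r_{i,n}$, i.e. convex; adding the constant $\bar\varepsilon$ and scaling by $(1-\bar\varepsilon)>0$ preserves convexity, and expectation preserves it, so $\phi$ is jointly convex. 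But convexity of $\phi$ alone does not give convexity of $\ln\phi$ — log is concave increasing. So the real work is a log-convexity argument.

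The cleanest path, I expect, is to prove $\phi$ is \emph{log-convex} in $(m_n,\gamma_{i,n})$. If $\psi$ were log-convex for each realization, then $\phi=\mathbb E[\psi]$ would be log-convex (sums/integrals of log-convex functions are log-convex), and $-R_{i,n}=\frac1\theta\ln\phi$ would be convex, done. So I would reduce to showing $\psi(m_n,\gamma)=e^{-\theta m_n r_{i,n}(m_n,\gamma)}(1-\bar\varepsilon)+\bar\varepsilon$ is log-convex. Now $e^{-\theta m_n r_{i,n}}$ is log-convex iff $-\theta m_n r_{i,n}$ is convex, i.e. iff $m_n r_{i,n}$ is concave — which holds on the feasible set (I would cite/reuse the Hessian bound from the proof of Lemma~\ref{lemma:r_convex}, where the feasible-set restriction $\gamma\ge\gamma_{\rm th}$ and $m_n\ge d/\log_2(1+\gamma_{\rm th})$ is exactly what guarantees the sign). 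Then I would invoke the elementary fact: if $u$ is log-convex and $c\ge 0$ is constant, then $u+c$ is log-convex — because $\ln(u+c)$ has Hessian $\frac{(u+c)\nabla^2 u - \nabla u\nabla u^\top}{(u+c)^2}$, and $\nabla^2\ln u\succeq 0$ means $u\nabla^2 u\succeq\nabla u\nabla u^\top$, whence $(u+c)\nabla^2 u\succeq u\nabla^2 u\succeq \nabla u\nabla u^\top$. That chain of inequalities is the crux, and it is where I would be most careful: it uses $\nabla^2 u\succeq 0$ (so that enlarging the scalar coefficient only helps) together with $u>0$, $c\ge0$. With $\psi$ log-convex, $\phi$ log-convex by closure under expectation, and finally $-R_{i,n}=\frac{1}{\theta}\ln\phi$ jointly convex in $(m_n,\gamma_{i,n})$ on the feasible set, which is the claim.

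The main obstacle is thus not the outer algebra but pinning down precisely \emph{where} $m_n r_{i,n}$ is concave: the term $\sqrt{m_n V(\gamma)}\,Q^{-1}(\bar\varepsilon)$ must dominate the curvature correctly, and joint concavity in $(m_n,\gamma)$ only holds once the feasible-set constraints are in force (otherwise $V$'s dependence on $\gamma$ can break the sign of the Hessian, exactly as in the reliability-oriented section). I would therefore lean on the Hessian sign analysis already performed for Lemma~\ref{lemma:r_convex} and just record the observation that the same computation yields concavity of $m_nr_{i,n}$ (equivalently, that $-r_{i,n}$ convex plus the explicit $m_n$-prefactor structure gives it), then assemble the log-convexity closure steps above. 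The rest is bookkeeping.
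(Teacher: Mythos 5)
Your overall architecture (reduce to log-convexity of $\phi=\mathbb{E}[e^{-\theta m_n r_{i,n}}(1-\bar\varepsilon)+\bar\varepsilon]$, using closure of log-convexity under addition of nonnegative constants and under expectation) is elegant, and those closure steps themselves are correct. But the linchpin is false: you assert that $m_n r_{i,n}$ is jointly concave and attribute this to Lemma~\ref{lemma:r_convex}. That lemma gives concavity of $r_{i,n}$, not of $m_n r_{i,n}$, and the two are not interchangeable here. Writing $K_{i,n}=m_n r_{i,n}=m_n\mathcal{C}(\gamma_{i,n})-\sqrt{m_n V(\gamma_{i,n})}\,Q^{-1}(\bar\varepsilon_{i,n})$, one has $\frac{\partial^2 K_{i,n}}{\partial m_n^2}=\frac{1}{4}V_{i,n}^{1/2}m_n^{-3/2}Q^{-1}(\bar\varepsilon_{i,n})\log_2 e>0$ whenever $\bar\varepsilon_{i,n}<1/2$, so $K_{i,n}$ is \emph{strictly convex} along the $m_n$ direction on the entire feasible set; no restriction $\gamma\ge\gamma_{\rm th}$, $m_n\ge d/\log_2(1+\gamma_{\rm th})$ repairs this. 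Consequently $-\theta m_n r_{i,n}$ is not convex, $e^{-\theta m_n r_{i,n}}$ is not log-convex (it is strictly log-concave in $m_n$), and both your claim that $\psi$ is convex per realization and your claim that it is log-convex collapse at the same point. The gap is not a bookkeeping issue to be deferred to the Hessian computation of Lemma~\ref{lemma:r_convex}; it is a sign that is genuinely the wrong way around.

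For contrast, the paper's proof does introduce the same auxiliary quantity $K_{i,n}=m_n r_{i,n}$, but it never needs (and could not use) concavity of $K_{i,n}$. It shows that $-R_{i,n}$ is convex and non-increasing in $K_{i,n}$, with the strict positivity of $\frac{\partial^2(-R_{i,n})}{\partial K_{i,n}^2}$ coming precisely from the additive $\bar\varepsilon_{i,n}$ term inside the expectation, and then verifies positive semidefiniteness of $\mathbf{H}_{\mathbf{m},\boldsymbol{\gamma}}(-R_{i,n})$ by expanding the determinant via the chain rule: the terms involving $\big(\frac{\partial^2(-R_{i,n})}{\partial K_{i,n}^2}\big)^2$ cancel exactly, and the remaining combination is shown nonnegative term by term using the signs $\frac{\partial r_{i,n}}{\partial m_n}\ge 0$, $\frac{\partial^2 r_{i,n}}{\partial m_n^2}\le 0$, $\frac{\partial^2 r_{i,n}}{\partial\gamma_{i,n}^2}\le 0$ and, crucially, the cross-derivative $\frac{\partial^2 r_{i,n}}{\partial m_n\partial\gamma_{i,n}}\ge 0$ from Appendix~C. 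The indefiniteness of $\nabla^2 K_{i,n}$ (convex in $m_n$, concave in $\gamma_{i,n}$) is absorbed by this cancellation structure rather than avoided. If you want to salvage a composition-style argument, you would have to work with the Hessian of $\ln(u+\bar\varepsilon)$ for $u=e^{-\theta K_{i,n}}$ directly, where the PSD condition reduces to $\bar\varepsilon\,\theta\,\nabla K_{i,n}\nabla K_{i,n}^\top\succeq (u+\bar\varepsilon)\nabla^2 K_{i,n}$ --- a quantitative inequality between the gradient outer product and the indefinite $\nabla^2 K_{i,n}$ that is essentially equivalent to the determinant bound the paper proves, not something that follows from a closure property.
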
\vspace{-10pt}
\begin{proof}
Appendix~D.\vspace{-5pt}
\end{proof}
However, the objective function of the dual problem is still not convex due to the average energy budget constraint. To tackle this, we also replace the variables as $a_n=m^3_n$, $b_{1,n}=\frac{1}{p_{1,n}}$, $b_{2,n}=\frac{p_{1,n}}{p_{2,n}}$, $\forall n\in\mathcal{N}$, resulting in following optimization problem:\vspace{-5pt}
\begin{mini!}[2]
{_{\mathbf{b}\in\mathbf{\Omega}_m,\mathbf{a}\in\mathbf{\Omega}_p}}{L[\tau]= -\frac{1}{M}\sum^N_{n=1}\sum^2_{i=1}R_{i,n}[\tau]+\lambda_E(\sum^N_{n=1}a^3_n[\tau](1/b^2_{1,n}[\tau]+1/(b^2_{1,n}[\tau]b^2_{2,n}[\tau]))-E_{\max})}
{\label{problem:sub_ab}}{}\vspace{-5pt}
\addConstraint{\sum^N_{n=1}a^3_n[\tau]\leq M}
\label{con:blocklength_sub_ab}\vspace{-5pt}
\addConstraint{\frac{1}{b_{1,n}[\tau]}\leq P_{\max},~\frac{1}{b_{1,n}[\tau]b_{2,n}[\tau]}\leq P_{\max}}
\label{con:power_sub_ab}\vspace{-5pt}
\addConstraint{\mathbf{z}=\mathbf{z}[\tau].}\vspace{-5pt}
\end{mini!}\vspace{-5pt}
Combing all of the above analytical results, {i.e., Lemma 2 and Lemma 3}, we have the following lemma characterizing the corresponding convexity:\vspace{-5pt}
\begin{theorem}\vspace{-5pt}
\label{lemma:EC_convex_problem}
Problem~\eqref{problem:sub_ab} is a convex problem.\vspace{-5pt}
\end{theorem}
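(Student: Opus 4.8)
The plan is to verify the two defining properties of a convex program — a convex feasible set and a convex objective — and to build the objective from the curvature results already established (Lemmas~\ref{lemma:r_convex}--\ref{lemma:EC_convex}) together with elementary monomial convexity.

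First I would dispatch the feasible set. Under the substitution $m_n=a_n^3$, $p_{1,n}=b_{1,n}^{-1}$, $p_{2,n}=(b_{1,n}b_{2,n})^{-1}$, constraint~\eqref{con:blocklength_sub_ab} reads $\sum_n a_n^3\le M$, a sublevel set of the separably convex map $\mathbf{a}\mapsto\sum_n a_n^3$ on the positive orthant; the peak-power caps~\eqref{con:power_sub_ab} become $b_{1,n}^{-1}\le P_{\max}$ and $(b_{1,n}b_{2,n})^{-1}\le P_{\max}$, sublevel sets of convex monomials with nonpositive exponents; and the admissibility sets $\mathbf{\Omega}_m,\mathbf{\Omega}_p$ of~\eqref{eq:omega_p}--\eqref{eq:omega_m} and the S(I)NR floor~\eqref{con:snr_EC_tau} translate, on the non-degenerate pairs $z_{1,n}\ge z_{\min}$, into the box $a_n\ge(d/\log_2(1+\gamma_{\rm th}))^{1/3}$ and one-sided affine bounds on $b_{1,n}$ and $b_{2,n}$ (possible because, after the substitution, $\gamma_{1|1,n}=z_{1,n}/(\sigma^2 b_{1,n})$ depends only on $b_{1,n}$ and $\gamma_{2|1,n}=z_{2,n}/(z_{1,n}b_{2,n})$ only on $b_{2,n}$), while on the degenerate pairs the variables are pinned to fixed values. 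Intersecting with the frozen realization $\mathbf{z}=\mathbf{z}[\tau]$ leaves a convex region.

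For the objective, write $L[\tau]=(\mathrm{I})+\lambda_E(\mathrm{II})$ with $(\mathrm{I})=-\tfrac1M\sum_{n=1}^N\sum_{i=1}^2 R_{i,n}[\tau]$ and $(\mathrm{II})=\sum_{n=1}^N m_n(p_{1,n}+p_{2,n})-E_{\max}$. Term $(\mathrm{II})$, after substitution, is an affine constant plus a nonnegative combination of $a_n^3 b_{1,n}^{-1}$ and $a_n^3 b_{1,n}^{-1}b_{2,n}^{-1}$, each of which a direct Hessian computation shows to be jointly convex on the positive orthant; hence $\lambda_E(\mathrm{II})$ is convex for $\lambda_E\ge 0$. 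For $(\mathrm{I})$, Lemma~\ref{lemma:EC_convex} already gives joint convexity of $-R_{i,n}$ in $(\mathbf{m},\boldsymbol{\gamma})$, so what remains is to show this is preserved under $m_n=a_n^3$ and the convex, decreasing maps $b_{1,n}\mapsto\gamma_{1|1,n}$, $b_{2,n}\mapsto\gamma_{2|1,n}$ above. I would establish this by writing out the $2\times2$ Hessian of $-R_{i,n}$ in $(a_n,b_{1,n})$ (user~1) and in $(a_n,b_{2,n})$ (user~2), reusing the first- and second-order bounds on $r_{i,n}$ and $R_{i,n}$ derived in the proofs of Lemmas~\ref{lemma:r_convex}--\ref{lemma:EC_convex}. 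The new content is the diagonal entry in $a_n$, which equals $9a_n^4\,\partial_m^2(-R_{i,n})+6a_n\,\partial_m(-R_{i,n})$ and is nonnegative exactly when $3m_n\,\partial_m^2(-R_{i,n})+2\,\partial_m(-R_{i,n})\ge 0$; since $-R_{i,n}$ is convex and nonincreasing in $m_n$, this reduces to a lower bound on $m_n$ of order $1/(\theta_{i,n}\mathcal C(\gamma_{i|j,n}))$, which is supplied by the blocklength floor $m_n\ge d/\log_2(1+\gamma_{\rm th})$ in $\mathbf{\Omega}_m$ in the relevant QoS regime; the $b$-diagonal and mixed entries lead to curvature-versus-slope inequalities in $\gamma$ of the same type, again controlled by the feasible-set bounds. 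With all Hessian entries nonnegative-definite, $-R_{i,n}$, and hence $(\mathrm{I})$, is jointly convex.

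Since $L[\tau]$ is then a sum of the convex $(\mathrm{I})$ and the convex $\lambda_E(\mathrm{II})$ minimised over the convex feasible region above, Problem~\eqref{problem:sub_ab} is convex. \textbf{The main obstacle} will be precisely the last step for $(\mathrm{I})$: because $-R_{i,n}$ is convex but \emph{decreasing} in $m_n$ while the substitution $m_n=a_n^3$ is convex and \emph{increasing} (and $\gamma$ is a convex, decreasing function of $b$), no elementary composition rule applies, and one must genuinely control the sign of the induced Hessian — which is where the feasible-set structure (the blocklength floor and the S(I)NR thresholds) becomes indispensable.
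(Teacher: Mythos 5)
Your proposal takes essentially the same route as the paper: the paper likewise checks convexity of the substituted energy term $\lambda_E\sum_n a_n^3(b_{1,n}^{-1}+b_{1,n}^{-1}b_{2,n}^{-1})$ via a Hessian determinant, verifies the substituted constraints directly, and asserts that the joint convexity of $-R_{i,n}$ from Lemma~\ref{lemma:EC_convex} survives the maps $m_n=a_n^3$, $\gamma_{1|1,n}=z_{1,n}/(\sigma^2 b_{1,n})$, $\gamma_{2|1,n}=z_{2,n}/(z_{1,n}b_{2,n})$ by ``following the methodology in Lemma~\ref{lemma:err_convex}'', i.e., by recomputing the Hessian with the chain rule exactly as you describe. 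The composition difficulty you flag as the main obstacle (convex decreasing outer function with convex increasing/decreasing inner substitutions, leading to residual conditions such as $3m_n\,\partial_m^2(-R_{i,n})+2\,\partial_m(-R_{i,n})\ge 0$ enforced by the feasible-set floors) is precisely the step the paper also leaves unexpanded, so your argument is correct and no less complete than the paper's own proof.
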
\vspace{-5pt}
\begin{proof}
Following the methodology  in Lemma 1, we can prove that the Hessian matrix of~\eqref{problem:sub_ab} is positive semi-definite. In particular, with
 $\frac{\partial m_n}{\partial a_n}=3a^2_n$, $\frac{\partial^2 m_n}{\partial a^2_n}=6a_n$, $\frac{\partial \gamma_{1,n}}{\partial b_{1,n}}=-\frac{z_1}{b^2_{1,n}\sigma^2}$,
 $\frac{\partial \gamma_{1,n}}{\partial b_{1,n}}=\frac{2z_1}{b^3_{1,n}\sigma^2}$ and
 $\frac{\partial \gamma_{2,n}}{\partial b_{2,n}}=\frac{z_1}{z_2}$, we can show that the joint convexity of $-R_{i,n}$ still holds, while $\lambda_E(\sum^N_{n=1}a^3_n(1/b^2_{1,n}+1/(b^2_{1,n}b^2_{2,n}))-E_{\max})$ is also jointly convex for any $\lambda_E\geq 0$ via checking the determinate of Hessian matrix $\frac{6a^7_n(b_{1,n}+1)}{b^5_{2,n}b^4_{1,n}}\geq 0$. The proof is quite standard as shown in Lemma 1, and to avoid repetition, we omit the details of the proof. 
Thus, as the sum of convex functions, $L[\tau]$ is also jointly convex. In the meantime, both constraints~\eqref{con:blocklength_sub_ab} and~\eqref{con:power_sub_ab} are convex. Hence, Problem~\eqref{problem:sub_ab} is a convex problem.
\end{proof}

\vspace{-5pt}
\textbf{Remark 2:} \emph{
Lemma~\ref{lemma:EC_convex_problem} can also be generalized to solve the optimal blocklength allocation for pure OMA scheme , as well as the optimal power allocation for pure NOMA scheme. In fact, they can be considered as special cases of Lemma 2 with either fixed blocklength $\mathbf{m}$ or transmit power $\mathbf{p}$.} 

Based on~Lemma~\ref{lemma:EC_convex_problem}, Problem~\eqref{problem:sub_ab} can be solved efficiently with standard convex optimization tools for a given $\lambda_E$ with any state $\tau$. 
{Therefore, all sub-problems can be solved via all channel realizations in parallel independently with a sufficient large number of $L$. Then, we reconstruct the optimal solutions for Problem~\eqref{problem:sub} with $m_n^*=(a^*_n)^3$, $p_{1,n}^*=\frac{1}{b_{1,n}^*}$ and $p_{2,n}^*=\frac{1}{b_{2,n}^*b_{1,n}^*}$, $\forall n\in\mathcal{N}$. Therefore, the dual problem~\eqref{problem:dual} can also be solved iteratively with the sub-gradient method, where the updated sub-gradient is $(E_{\max}-\mathbb{E}[\sum^N_{n=1}m^*_n(p^*_{1,n}+p_{2,n}^*)])$~\cite{Boyd}.  
Then, with solution of the dual problem $\lambda^*_{\text{E}}$, the optimization problem in any time frame $t$ can be solved efficiently.} 
 
{Note that there can exist multiple set of solutions $\mathbf{M}^*$ and $\mathbf{P}^*$. However, it is possible that some solutions may not satisfy the constraints of Problem~\eqref{problem:effective_capacity_original}~\cite{dual}. To tackle this issue, after obtaining $\mathbf{M}^*$ and $\mathbf{P}^*$, we should always check whether the inequalities (29b)-(29e) hold. It should be emphasized that if the time-sharing condition is satisfied, there always exists a set of feasible $\mathbf{M}^*$ and $\mathbf{P}^*$. A corresponding pseudo-code is shown in Algorithm 2.} 

\begin{algorithm}[!t]
\scriptsize
\caption{Algorithm to solve~\eqref{problem:effective_capacity_original}}\label{alg:EC}
\begin{algorithmic}[1]
\State Generate a set of channel realization $\mathbf{z}$ with a sufficiently large $L$.
\For {$\tau=1,\dots, L$}
    \State \textbf{Initial:} $\mathbf{z}=\mathbf{z}[\tau]$ 
        \For {user pair $n=1,\dots,N$}
            \If {$z_{i,n}<z_{\min}$}
                $m^*_{n}=0,p^*_{1,n}=0,p^*_{2,n}=0$
            \EndIf
        \EndFor
    \Ensure (14), (15), (29b), (29c), and (29d)
    \State Solve (29) according to Lemma 1 and get $(\mathbf{a}^*[\tau],\mathbf{b}^*[\tau])$
    \State Reverse the variable substitution with $m_n^*[\tau] =(a_n^*[\tau])^3$ $p_{1,n}^*[\tau] = 1/b_{1,n}^*[\tau]$ and $p_{2,n}^*[\tau] = 1/(b_{1,n}^*[\tau]b_{1,n}^*[\tau])$.
\EndFor
\State Update $\lambda_{\text{E}}$ according to (31) based on  $\mathbf{M}^*$ and $\mathbf{P}^*$.
\If{ The average energy constraint (29c) satisfied with $\mathbf{M}^*$ and $\mathbf{P}^*$} 
\State  $\lambda^*_{\text{E}}=\lambda_{\text{E}}$.
    \Else
        \State let $\tau=1$ and return to Step 2.
\EndIf
\end{algorithmic}
\end{algorithm}

\vspace{-5pt}
\section{Numerical Results}
In this section, we provide the numerical results to validate our analytical characterizations and evaluate the performance of both the proposed reliability-oriented and effective capacity-oriented designs. To demonstrate  the advantage of our approaches, we also show the performance of other benchmarks under the same setups. In the following section, we provide the simulation parameters and discuss the details of the applied benchmarks. Then, we present the corresponding simulation results.
\vspace{-10pt}
\subsection{Simulation and Benchmark Setups}
{Unless specifically mentioned, we adopt the simulation setup from~\cite{Hu_reference}. The default parameterization is as follows: Maximal transmit power $P_{\max}=30$ dBm for each user,} {where up to $5$ user pairs are available in the network, i.e., $10$ users in total with 5 time slots and 10 power levels in our hybrid NOMA-TDMA scheme.} We set a unit average channel gain for all links. Furthermore, those channels are assumed to experience i.i.d. block Rayleigh fading, i.e., $z_{i,n}\sim\mathcal{N}\{0,1\}$, $\forall i,n$. {For the reliability-oriented design, we consider  that transmissions with packet size of $d=320$ bits are carried out with default total blocklength $M=350n$ in each time frame with energy budget $E_{\max}= 700$ J. Meanwhile, we set a unified target error probability as $\bar \varepsilon = 10^{-4}$ and QoS exponent $\theta=10^{-3}$ with average energy budget $\bar{E}_{\max}=700$ J for the effective capacity-oriented design.} {The simulations are carried out with 2000 time frames.} Moreover, we also consider the following two approaches as benchmarks:
\begin{itemize}
    \item \textbf{{Hybrid NOMA-TDMA with the IBL solutions (Hybrid IBL)}}: 
    {Consider the exact same hybrid NOMA-TDMA scheme in this paper, but the framework is constructed based on the ideal infinite blocklength (IBL)  assumptions, i.e., transmissions are arbitrarily reliable at Shannon's capacity.  In particular, the blocklength is uniformly distributed to each user pair with $m_n=\frac{M}{n}$ unless the coding rate is above the Shannon capacity. Moreover, the power allocation is adopted from the widely applied strategy in~\cite{NOMA_power_IBL}.} {The FBL performance with such IBL solutions is shown to demonstrate the importance of investigating the URLLC performance.}
    \item {\textbf{Optimal OMA scheme (TDMA optimal)}: Consider a pure OMA scheme in the time domain with a single carrier, where the blocklength (in symbols) is divided into $2n$ slots for each user.} However, instead of uniform distribution, the blocklength is optimally allocated according to Corollary 1. It should be pointed out that the energy budget $E_{\max}$ (or average energy budget $\bar{E}_{\max}$) should still be fulfilled in the considered scheme.  The performance of this scheme is illustrated to show the advantage of introducing the hybrid scheme compared to the conventional OMA schemes.
\end{itemize}
\vspace{-10pt}
\subsection{Reliability Comparison}
\vspace{-5pt}
{We first show the reliability comparison between our proposed design and aforementioned benchmarks under two setups:} $i)$ The total available blocklength is fixed regardless of the number of user pairs, i.e., $M=700$ symbols. $ii)$ The average available blocklength remains consistent regardless of the number of user pairs, i.e., $M=350n$ symbols. Recall that we leverage the SINR approximation in~\eqref{eq:snr_app} to characterize the optimization problems. Therefore, the simulations also depict the  
results based on this approximation, referred to as \textbf{hybrid approx.}, as well as the results obtained via exhaustive search, referred to as \textbf{hybrid exhaustive}. Finally, the performance results of our approaches as depicted as \textbf{hybrid optimal}.

\begin{figure}[!t]
\begin{minipage}[t]{0.49\linewidth}
    \centering
\includegraphics[width=0.8\textwidth,trim=10 20 10 10]{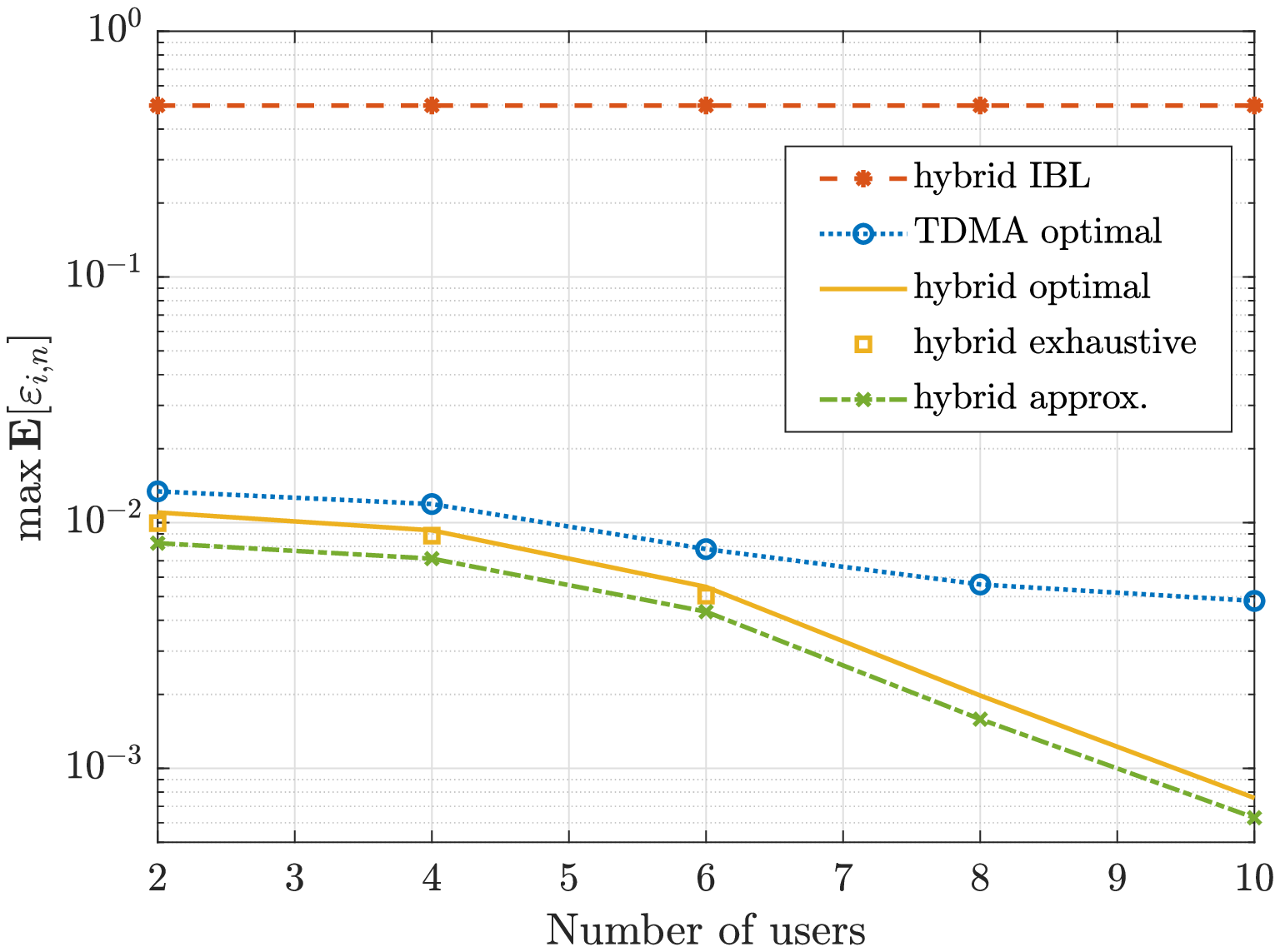}
\caption{Error Probability evaluation  in comparison to benchmarks with an adaptive total blocklength $M=350n$ symbols.}
\label{fig:err_comparison_adaptive}
\end{minipage}
\begin{minipage}{0.005\linewidth}
    ~
\end{minipage}
\begin{minipage}[t]{0.49\linewidth}
    \centering
\includegraphics[width=0.805\textwidth,trim=10 20 10 10]{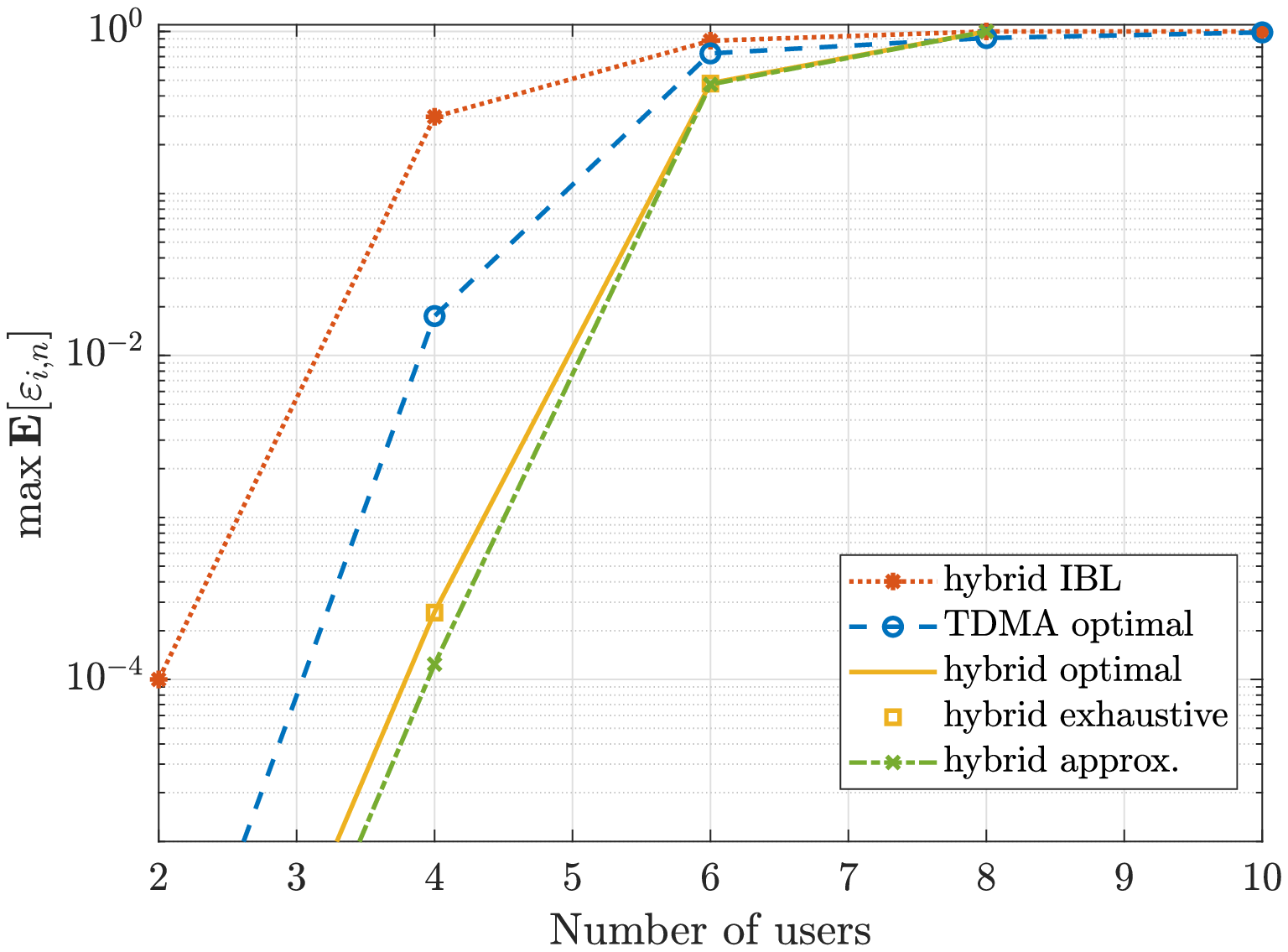}
\caption{Error Probability evaluation  in comparison to benchmarks with  a fixed total blocklength $M=700$ symbols.}
\label{fig:err_comparison_fix}
\end{minipage}
\vspace{-26pt}
\end{figure}

In particular, we plot the optimal average maximal error probability
against the number of user in Fig.~\ref{fig:err_comparison_adaptive}. To evaluate the performance for different network scales, we set the total blocklength as $m=350n$ symbols, i.e., each user pair has an average blocklength of 350 symbols. In such a scenario, the error probability decreases with increasing number of users in our hybrid NOMA-TDMA design. This is due to the fact that our design is able to fully utilize the radio resources between users to maintain the error probability balance among the users with channel differences. For instance, if one user has a low channel gain in an arbitrary channel realization, as long as the channel gain is not below the threshold $z_{\min}$, the system will assign more blocklength and transmit power to that user so that the overall error probability is improved. 

{It should be emphasized that the performance improvement is achieved via our analytical results that explicitly take the FBL impact into account. If we simply adopt the IBL assumption, the performance will be much worse, as shown in hybrid IBL. Since the power and blocklength allocation in the hybrid IBL scheme are fixed based on Shannon capacity, increasing the number of users does not influence the performance at all.} On the other hand, if the system solely relies on OMA schemes, e.g., TDMA, it may be challenging to support higher connectivity while guaranteeing the performance. 
{This can be observed by comparing the performance of our proposed scheme and pure TDMA scheme. 
In particular, the hybrid NOMA-TDMA scheme performs slightly better than TDMA when the user number is low. However, when the user number increases, the advantage of our proposed scheme enlarges significantly. This is due to the fact that the NOMA scheme benefits from the diversity of the users.}

Overall,     the above results indicate   the flexibility of the proposed scheme. Next, we again plot the error probability as a function of the number of users, but with fixed total available blocklength $M=700$ symbols. Clearly, since all users have to share the available fixed blocklength, the error probability increases with increasing number of users regardless of which scheme is used. However, our hybrid design still shows better resilience until the system is overloaded. 
{However, our hybrid design still shows better resilience until the system is overloaded. In such case, the system should either relax the latency requirements or introduce more available carriers to support higher user number.} 
It is worth mentioning that our design has a higher error probability numerically in comparison with hybrid IBL after 8 users in the system. However, this does not mean that our design has degraded. Instead, it simply implies when under that setup, the optimization problem~\eqref{problem:err_tau_ab} becomes infeasible for most channel realizations, where we consider the error probability of users in those channel realizations as being equal to one. 
{Similar to Fig. 3, the performance of the optimized TDMA scheme is actually acceptable when the radio resource is sufficient. Nevertheless, it is still outperformed by our design for all number of users. This comparison demonstrates the resilience of our hybrid NOMA-TDMA scheme.}

In both of figures, we also illustrate the performance of the SINR approximation in~\eqref{eq:snr_app}, as well as the results of exhaustive search, which can be considered as practical performance lower bounds. It is observed that the approximation does introduce some performance gap, but the trends of the curves remain the same. Moreover, the optimal value of error probability $\varepsilon^*_{i,n}$ is actually calculated based on the exact SINR, while the optimal solutions of $\mathbf{m^*}$ and $\mathbf{p^*}$ are obtained via the approximation. Therefore, the actual performance gap is more insignificant, as shown in \textbf{hybrid optimal} and \textbf{hybrid exhaustive}. It should be pointed out that the results of exhaustive search are only available until 6 users, since the computational complexity for larger number of users is too high. This observation further confirms the scalability and efficiency of our proposed approaches.

\begin{figure}[!t]
\begin{minipage}[t]{0.49\linewidth}
    \centering
\includegraphics[width=0.8\textwidth,trim=20 25 25 15]{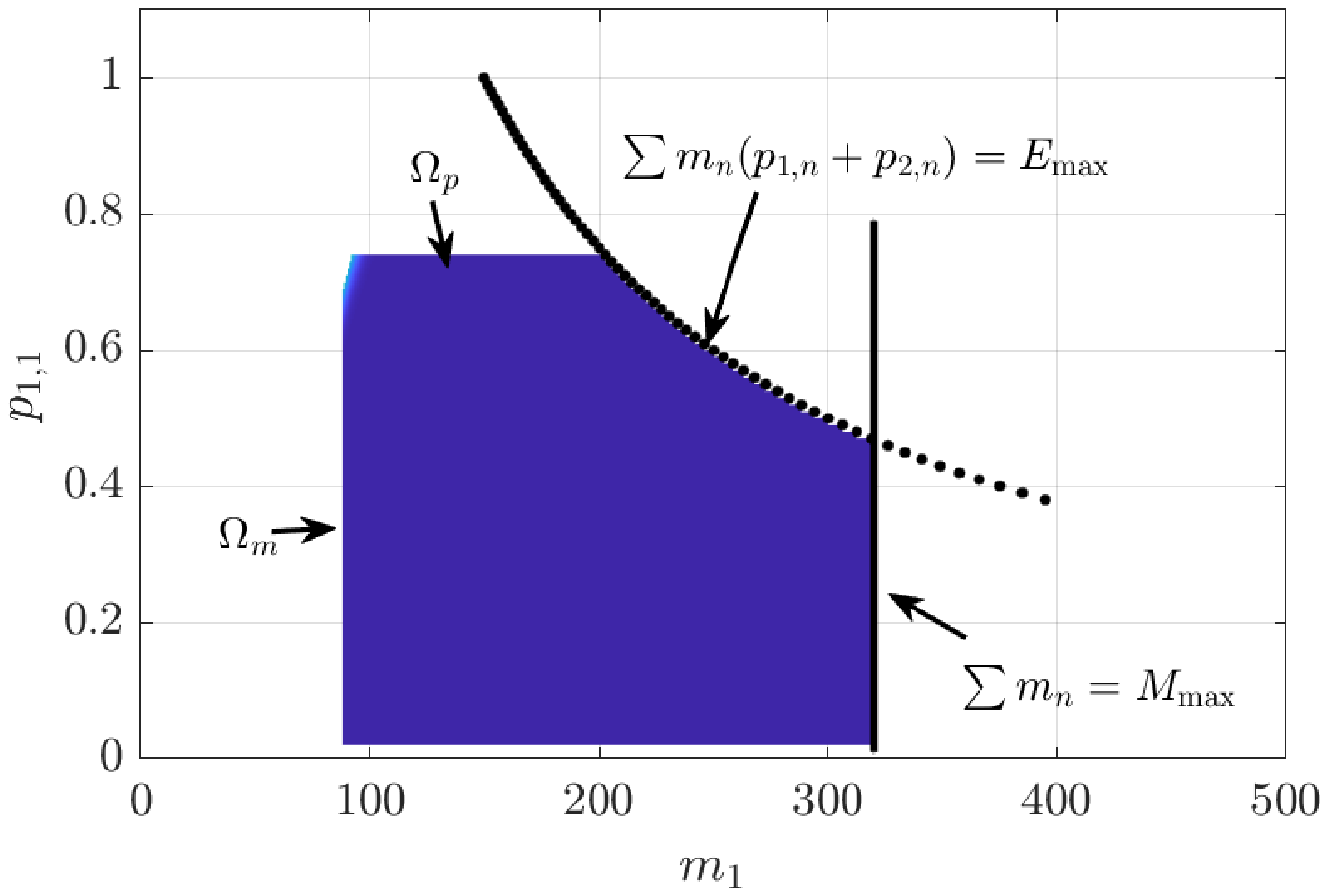}
\caption{Feasible set of~\eqref{problem:err_tau} with 4 users and channel realization $\mathbf{z}=\{0.5,1.5;0.5,1.5\}$, where $m_1$ and $p_{1,1}$ are presented as variables.}
\label{fig:feasible_mp}
\end{minipage}
\begin{minipage}[t]{0.005\linewidth}
    ~
\end{minipage}
\begin{minipage}[t]{0.49\linewidth}
    \centering
\includegraphics[width=0.8\textwidth,trim=20 25 25 15]{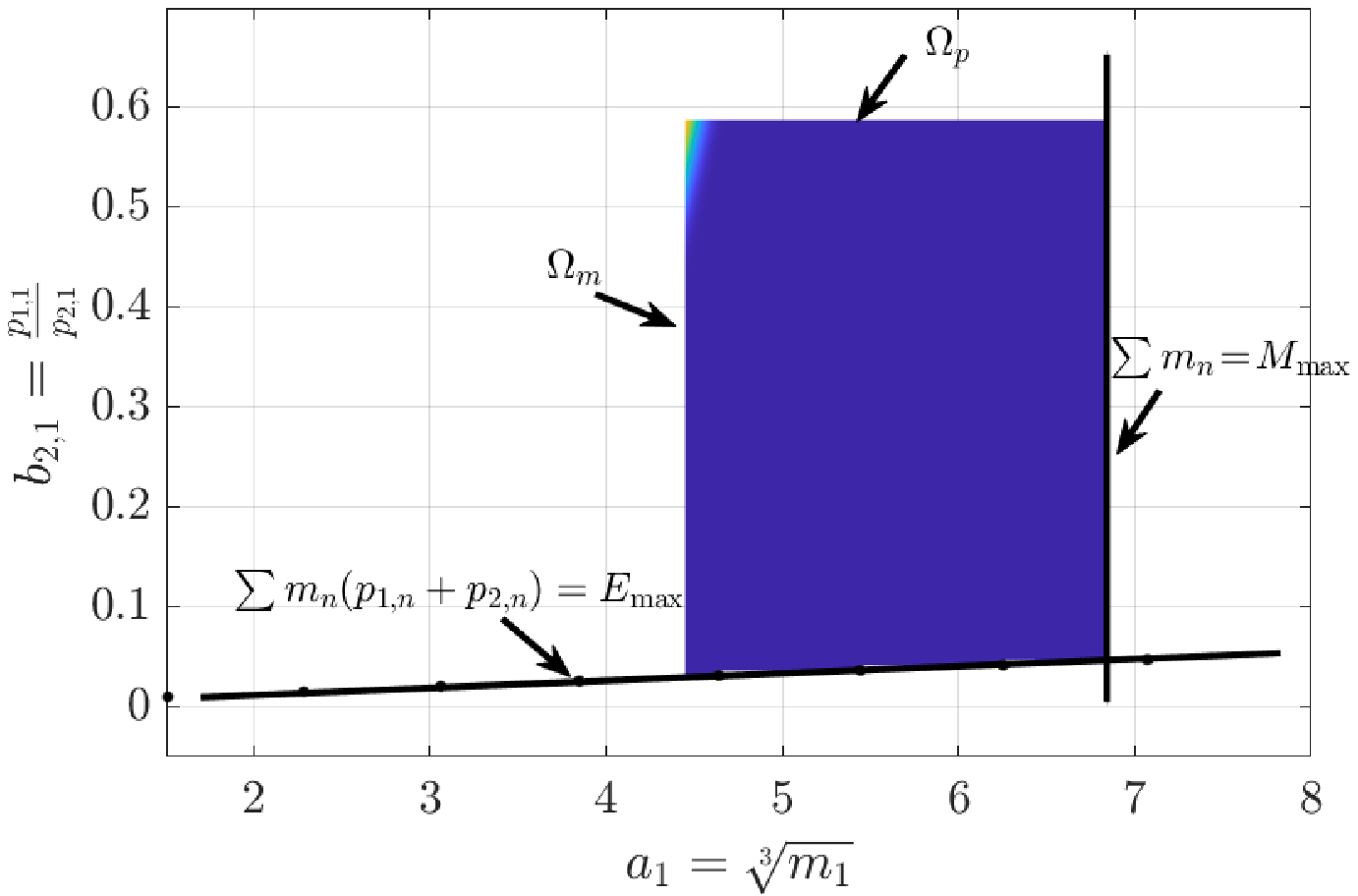}
\caption{Feasible set of~\eqref{problem:err_tau_ab} with 4 users and channel realization $\mathbf{z}=\{0.5,1.5;0.5,1.5\}$,  where $a_1=\sqrt[3]{m_1}$ and $b_{2,1}=\frac{p_{1,1}}{p_{2,1}}$ are presented as variables.}
\label{fig:feasible_ab}
\end{minipage}
\vspace{-20pt}
\end{figure}

\vspace{-17pt}
\subsection{Results Validation per Channel Realization}
In both reliability- and effective capacity-oriented designs, we introduce new variables $a_n=\sqrt[3]{m_n}$, $b_{1,n}=1/p_{1,n}$ and $b_{2,n}=p_{1,n}/p_{2,n}$ to tackle the issue of non-convex feasible set in \eqref{problem:err_tau} and~\eqref{problem:sub}. Therefore, we consider a scenario with 4 users as an example to show the feasible set with $m_1$ and $p_{1,1}$ in Fig.~\ref{fig:feasible_mp}, as well as its corresponding feasible set with $a_1$ and $b_{2,1}$ in Fig.~\ref{fig:feasible_ab}. The channel realizations of four users are set as $\mathbf{z}=\{0.5,1.5,0.5,1.5\}$. In order to reduce the dimensions of variables, we let $p_{2,n}=P_{\max}-p_{1,n}$ and $m_2=m_1$. Based on these setups, we can observe that the non-convexity arises from the energy budget constraint. The boundary of this constraint is actually the positive part of the rectangular hyperbola in the form of $f(x,y)=x/y$. Moreover, $\mathbf{\Omega}_p$ and $\mathbf{\Omega}_m$ construct a threshold for power and blocklength according to~\eqref{eq:omega_m} and~\eqref{eq:omega_p}, which have a linear boundary. After replacing the variables with new variables, the constraint becomes convex while the convexity of the rest constraint remains is unaffected. As a result, the feasible set becomes convex, confirming Lemma~\ref{lemma:err_convex}. It should be pointed out that similar shapes for the feasible set with different numerical values can be observed in Problem~\eqref{problem:sub} and Problem~\eqref{problem:sub_ab} in the effective capacity-oriented design, if $\lambda_{\rm E}$ is given in the same system setups. 

{Next, we move on to the power allocation strategies for NOMA scheme, as well as the blocklength allocation strategies for TMDA scheme for the two-user case.   
In Fig.~\ref{fig:err_inst_h}, we plot the optimal maximal error probability for both NOMA and TDMA scheme with instantaneous channel gains. Moreover, we set the fixed channel gain $z_{2,1}=2$ and vary the channel gain $z_{1,1}$, while letting the maximal transmit power to be $P_{\max}=\{0.8, 1, 1.2\}$W. Although the error probability of both schemes improves with increase of the channel gain, their performance behaviors are quite different. In particular, we can clearly observe the trade-off between TDMA and NOMA scheme. NOMA scheme benefits from the diversity of the channel gain among the users. However, when the two channel gains are even, NOMA scheme suffers from the interference. The gain from utilizing the blocklength can no longer compromise the negative influence of error propagation from the imperfect SIC. In such case, TDMA scheme becomes a better choice since there is no interference at all. 
{ This observation further motivates us to investigate the hybrid NOMA-TDMA scheme with the reliability-oriented design framework for the scenarios with sporadic traffics since it combines both the advantage of NOMA and TDMA scheme.  }}
\begin{figure}[!t]
    \centering
\includegraphics[width=0.4\textwidth,trim=0 10 0 15]{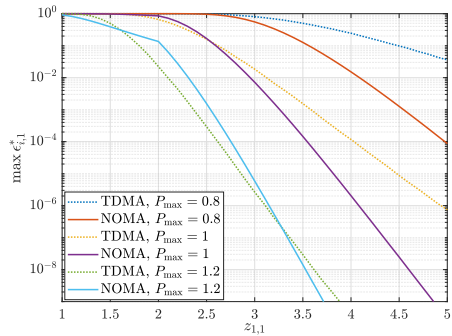}
\caption{Optimal maximal error probability for both TDMA and NOMA schemes against instantaneous channel gain $z_{1,1}$ under maximal transmit power constraint $P_{\max}=\{0.8, 1, 1.2\}$W, while $z_{2,1}=2$ is fixed.}
\label{fig:err_inst_h}\vspace*{-10pt}
\end{figure}

\begin{figure}[!t]
\begin{minipage}[t]{0.49\linewidth}
    \centering
\includegraphics[width=0.8\textwidth,trim=15 20 30 20]{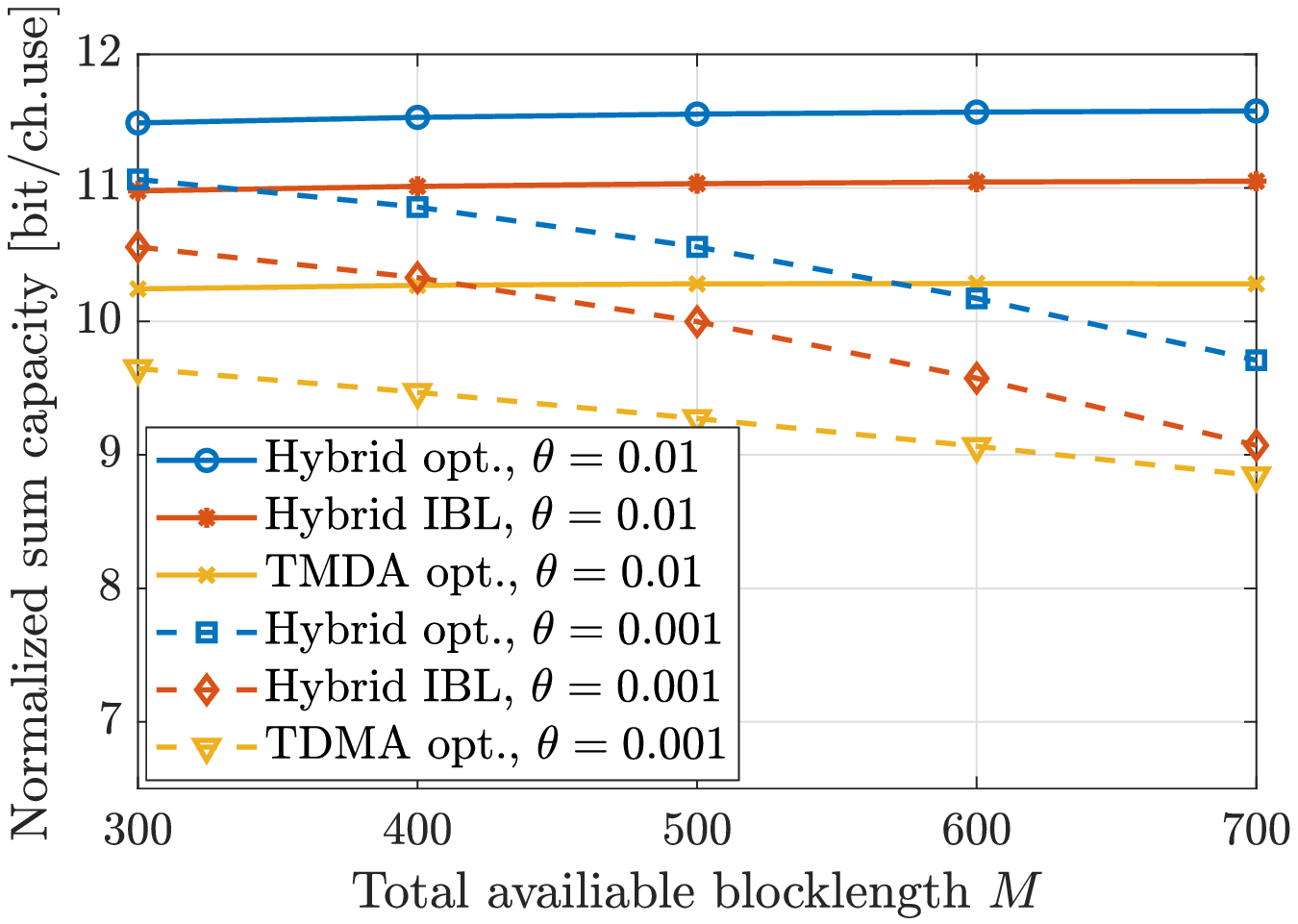}
\caption{Normalized sum capacity against target error probability $\bar{\varepsilon}$ with 4 user pairs.}
\label{fig:ec_vs_M}
\end{minipage}
\begin{minipage}{0.005\linewidth}
~
\end{minipage}
\begin{minipage}[t]{0.49\linewidth}
    \centering
\includegraphics[width=0.8\textwidth,trim=15 20 30 20]{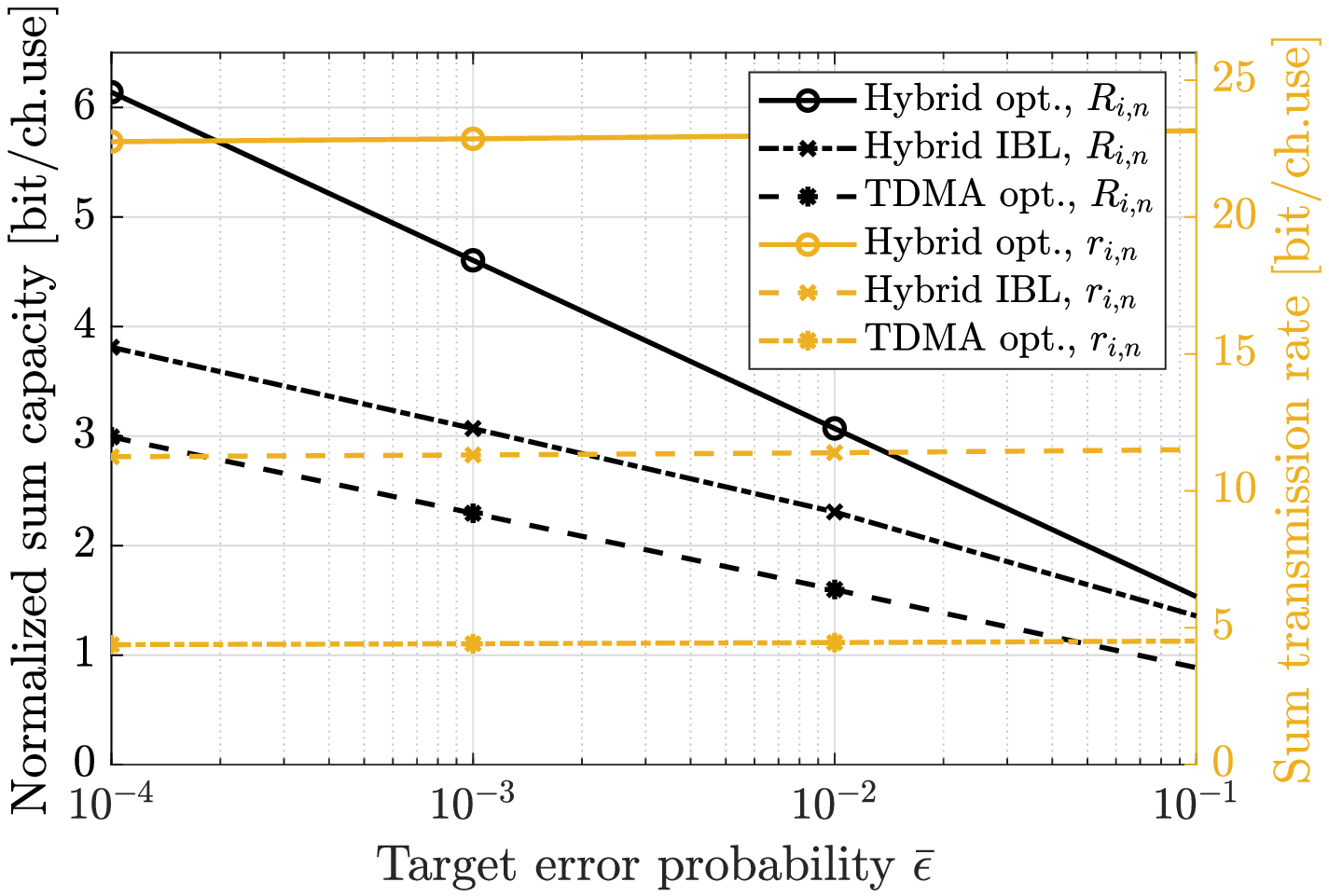}
\caption{Normalized sum capacity and sum transmission rate against total available blocklength with 4 user pairs. QoS component is set as $\theta=\{10^{-2},10^{-3}\}$.}
\label{fig:err_comparison}
\end{minipage}
\vspace{-29pt}
\end{figure}

\vspace{-10pt}
\subsection{Impact of Parameters on Effective Capacity}
In previous subsections, we have discussed the validation of our proposed approaches and resource allocation strategies. We can draw  similar conclusions for the effective capacity-oriented design, e.g., the accuracy of SINR approximation in Fig.~\ref{fig:err_comparison_adaptive}, the feasible set issue in Fig.~\ref{fig:feasible_mp}. %
In addition to these discussion, we should also investigate the impact of parameters that are unique in the effective capacity-oriented design.

In Fig.~\ref{fig:ec_vs_M}, we plot the normalized sum effective capacity against total available blocklength $M$ with 4 user pairs while varying the QoS component $\theta=\{0.01,0.001\}$. We also show the performance of two benchmarks to demonstrate the advantage of our proposed design of the hybrid NOMA-TDMA scheme. 
Unlike the reliability-oriented design, increasing $M$ does not always improve the effective capacity. Instead, the improvement depends on the value of $\theta$. For example, for $\theta=0.001$, the effective capacity actually diminishes, since the transmission rate can hardly be improved with a larger number of $M$ while the queue delay is proportional to $M$. 
In fact, if $\theta$ is sufficiently high, e.g., $\theta\approx 1$, there is almost no QoS requirement. Then, normalized effective capacity will degenerate into a modified transmission rate as $\frac{1}{M}\sum R_{i,n}\approx \sum r_{i,n}$. Furthermore, as expected, our proposed design outperforms both benchmarks regardless of the considered setting. However, the differences are not as dramatic as the differences in the reliability-oriented design. This is due to the fact that the objective is sum of effective capacities $\sum R_{i,n}$ instead of the maximal function of error probability. 
{ Therefore, compared with the conventional multi-access schemes, the advantage of the proposed approach in the queue-influenced scenarios is still preserved. }

To investigate the impact of the target error probability $\bar{\varepsilon}$ on the system and demonstrate the performance difference between normalized effective capacity and coding rate under the considered hybrid NOMA-TDMA scheme, we plot the normalized sum capacity and sum transmission rate versus target error probability $\bar{\varepsilon}$. As  $\bar{\varepsilon}$ increases, the FBL transmission rate always increases for all users according to~\eqref{eq:coding_rate_1}. However, this is not true for the effective capacity. $R_{i,n}$ is heavily influenced  by $\bar{\varepsilon}$, since higher $\bar{\varepsilon}$ means that the queue delay requirement is more likely to be violated. In the extreme case with $\bar{\varepsilon}=1$, the FBL transmission rate simply coincides with the Shannon capacity, i.e., $r_{i,n}=\log_2(1+\gamma_{i,n})$, while effective capacity is zero, since all transmissions will violate the QoS constraints. Indeed, this has motivated us to investigate the performance of the effective capacity-oriented design in this work.

\vspace{-10pt}
\section{Conclusion}
Hybrid NOMA-TDMA scheme is one of the most promising approaches  to   provide massive  connectivity  to devices in   future IoT networks. 
{In this work, we studied the hybrid NOMA-TDMA scheme in the  uplink communications, where transmissions are performed with FBL codes due to the low-latency requirements.}  
We provided    design frameworks for two types of latency-sensitive applications: First, for latency-critical applications where the main concern is the reliability of the one-shot transmission, we proposed a reliability-oriented design from a pure physical-layer perspective. Subsequently, for the other type of latency-sensitive  applications in which the random queuing behavior plays a role, we introduce a design framework   maximizing the link-layer performance, namely effective capacity. 
In particular,  for the reliability-oriented design, we aim at minimizing the maximal error probability among the users by jointly allocating the blocklength and transmit power of each user. {We decomposed the original problem into several solvable sub-problems and proved the convexity of those problems.} 
For the effective capacity-oriented design, we  aimed at maximizing the sum of effective capacities of all users. After  leveraging the Lagrange dual method, %
we characterized the joint concavity of the transmission rate, as well as the effective capacity and showed that the
strong duality holds, and the problem can be efficiently solved.  
Via simulations, we validated our analytical models and demonstrated the advantages of our proposed approaches in comparison to   benchmarks. 
We also revealed the impact of various parameters on the system performance and its influences on the resource allocation strategies.   Especially, a significant performance gap is observed between IBL-based and FBL-based designs, which confirms the necessity of  taking the FBL impact into account in the design frameworks.

We  conclude the   paper by   reiterating that the proposed analytical model has a high extensibility.
{First, although we studied the uplink NOMA scheme, the optimal power control strategy for NOMA scheme can be also be extended to a downlink NOMA scheme or with even higher power levels.}  
More importantly, the joint convexity shown in this work has a high potential to facilitate the designs with  a similar problem structure, i.e.,    joint power-blocklength allocation where the transmit power contributes to both signal and interference. 
For instance, this can be directly applied to the joint power-blocklength allocation in NOMA-relaying   and adaptive NOMA/OMA schemes. 
{ Moreover, our design framework can also be extended to a random access scheme, where the time slots are no longer pre-allocated, i.e., it becomes a hybrid NOMA-ALOHA scheme. }

\vspace{-10pt}
\appendices
\vspace{-5pt}
\section{Proof of Lemma~\ref{lemma:err_convex}}\vspace{-5pt}
\begin{proof}
We first investigate the joint convexity of the objective function. Since the objective is a max function, the convexity can be shown by proving the convexity of error probability for each user $\varepsilon_{i,n}$, i.e., the max function %
is convex if all the components %
are convex. 
First, consider $\mathbbm{1}_{\varepsilon<1}(\varepsilon_{1,n})=1$. Recall that the error probability for user 1 in any pair $n$ is given by %
    $\varepsilon_{1,n}=\varepsilon_{2|1,n}+\varepsilon_{1|1,n}$.
Then, we can further investigate the joint convexity of both $\varepsilon_{2|1,n}$ and $\varepsilon_{1|1,n}$. In particular, for the Hessian matrix $\mathbf{H}(\varepsilon_{2|1,n})$, we have:\vspace{-10pt}
\begin{equation}
    \mathbf{H}( \varepsilon_{2|1,n}(\mathbf{a},\mathbf{b}))=\mathbf{H}(\nabla^2\varepsilon_{2|1,n}(a_n,b_{1,n})).\vspace{-5pt}
\end{equation}\vspace{-5pt}
The equality holds since $\varepsilon_{2|1,n}$ only depends on the blocklength $m_n=a^3_n$ and the SINR $\gamma_{2|1}=\frac{z_{2,n}}{z_{1,n}b_{2,n}}$. Then, the first order and second order derivatives of $\varepsilon_{2|1,n}$ w.r.t. $m_n$ are given by:\vspace{-5pt}
\begin{equation}
    \frac{\partial \varepsilon_{2|1,n}}{\partial m_n}= -\frac{\ln 2}{\sqrt{2\pi m_n V_{2|1,n}}}e^{-\omega_{2|1,n}}\left(
        \mathcal{C}_{2|1,n}+m_n
     \right)\leq 0,\vspace{-5pt}
\end{equation}\vspace{-5pt}
\begin{equation}
    \frac{\partial^2 \varepsilon_{2|1,n}}{\partial m^2_n}= \omega_{2|1,n}\frac{\ln 2}{\sqrt{8\pi m_n V_{2|1,n}}}e^{-\omega_{2|1,n}}\left(
        \frac{\mathcal{C}_{2|1,n}-3\frac{d}{m_n}}{m_n}
     \right)\geq 0,\vspace{-5pt}
\end{equation}\vspace{-5pt}
where we define $V_{2|1,n}=V(\gamma_{2|1,n})$, $\mathcal{C}_{2|1,n}=\mathcal{C}({2|1,n})$ and $\omega_{2|1,n}=\sqrt{\frac{m_n}{V_{2|1,n}}} \left(\mathcal{C}_{2|1,n}-\frac{d}{m_n}\right)$ to simplify the notation. 
Furthermore, the second-order partial derivatives w.r.t. $a_n$ are expressed as:\vspace{-5pt}
\begin{equation}
\begin{split}
    \frac{\partial^2 \varepsilon_{2|1,n}}{\partial a^2_n}&=\frac{\partial \varepsilon_{2|1,n}}{\partial m_n}\frac{\partial^2 m_n}{\partial a^2_n}
    +\frac{\partial^2 \varepsilon_{2|1,n}}{\partial m^2_n}\left(
    \frac{\partial m_n}{\partial a_n}
    \right)^2\\
    &
    = \frac{3a_ne^{-\wren}}{\sqrt{2\pi}}
    \Bigg(\underbrace{\frac{\partial \wren}{\partial m_n}}_{\geq 0}
        \left(
            3m_n\wren \frac{\partial \wren}{\partial m_n}-2
        \right)
        - \underbrace{3m_n\frac{\partial^2 \wren}{\partial m^2_n}}_{\leq 0}
    \Bigg)\\
    &\geq \frac{3a_ne^{-\wren}}{\sqrt{2\pi}}\frac{\partial \wren}{\partial m_n}\!
    \Bigg(\!
        3\wren m_n(\gamma_{2|1,n}\!+\!1)\!\sqrt{\frac{m_n}{\gamma_{2|1,n}(\gamma_{2|1,n}\!+\!2)}}\Cren\!-\!2\!+\!3d\wren\sqrt{\frac{m_n}{\Vren}}
    \Bigg)\\
     \overset{\gamma_{2|1,n}\geq 1}&{\geq}\frac{3a_ne^{-\wren}}{\sqrt{2\pi}}\frac{\partial \wren}{\partial m_n}
     \left(
         3\cdot1.25\cdot 1(1+1)\sqrt{1/3}\log_22-2
     \right)\geq 0. \nonumber
\end{split}
\end{equation}
Therefore, $\varepsilon_{2|1,n}$ is convex in $a_{n}$, i.e., in $\mathbf{a}$. A similar conclusion can be drawn for $\mathbf{b}$ by showing:\vspace{-10pt}
\begin{equation}
\begin{split}
\!\!\!\!\!\!\frac{\partial{\wren}}{\partial{\gamma_{2|1,n}}}
&=-\frac{e^{-\wren}}{\sqrt{2\pi}}\!
\left(\!
    \frac{\sqrt{m_n/\Vren}\left(\gamma_{2|1,n}^2\!+\!2\gamma_{2|1,n}\!-\!\ln(\gamma_{2|1,n}\!+\!1)\right)}{(\gamma_{2|1,n}^2\!+\!2\gamma_{2|1,n})(\gamma_{2|1,n}\!+\!1)}
    \!+\!\frac{d\ln2}{\!\sqrt{m_n\Vren^3}}\frac{1}{(1\!+\!\gamma_{2|1,n})^3}\!
\right)\!\!\!\!\\
\overset{\gamma_{2|1,n}\geq 1}&{\leq}0,
\end{split}
\end{equation}\vspace{-5pt}
\begin{equation}
\begin{split}
\!\!\frac{\partial^2{\wren}}{\partial{\gamma^2_{2|1,n}}}
&=\frac{\wren}{\sqrt{2\pi}}e^{-\wren}
\left(\!
    \frac{\sqrt{m_n}\left(-(\gamma_{2|1,n}\!+\!1)^3\!+\!\frac{1}{\gamma_{2|1,n}\!+\!1}\right. %
\left.\!+\!3\ln2(\gamma_{2|1,n}\!+\!1)\Big(\Cren\!-\!\frac{d}{m_n}\Big)\right)}{(\gamma_{2|1,n}(\gamma_{2|1,n}+2))^{\frac{5}{2}}}\!
\right)\\
\overset{\gamma_{2|1,n}\geq 1}&{\geq}0,
\end{split}\vspace{-10pt}
\end{equation}
as well as the second-order partial derivative w.r.t. $b_{2,n}$:\vspace{-5pt}
\begin{equation}
\begin{split}
\frac{\partial^2 \varepsilon_{2|1,n}}{\partial b^2_{2,n}}&=\frac{\partial \varepsilon_{2|1,n}}{\partial \gamma_{2|1,n}}\frac{\partial^2 \gamma_{2|1,n}}{\partial b^2_{1,n}}
+
\frac{\partial^2 \varepsilon_{2|1,n}}{\partial \gamma^2_{2|1,n}}
    \left(
        \frac{\partial \gamma_{2|1,m}}{\partial b_{2,n}}
    \right)^2\\
    &=\frac{\gamma_{2|1,n}}{b^2_{2,n}}\frac{e^{-\wren}}{\sqrt{2\pi}}
    \Big(
        \underbrace{\frac{\partial \wren}{\partial \gamma_{2|1,n}}}_{\geq 0}
        \left(
            \wren\gamma_{2|1,n}
        \frac{\partial \wren}{\partial \gamma_{2|1,n}}-1
        \right)-\gamma_{2|1,n}
        \underbrace{\frac{\partial^2 \wren}{\partial \gamma^2_{2|1,n}}}_{\leq 0}
    \Big)\\
    &\geq \frac{\gamma_{2|1,n}}{b^2_{2,n}}\frac{e^{-\wren}}{\sqrt{2\pi}}
    \frac{\partial \wren}{\partial \gamma_{2|1,n}}
    \left(
        1.25\cdot 1\cdot\frac{4}{3}\frac{\sqrt{5}}{(1\!+\!2)(1\!+\!1)}(1\!+\!2\!-\!\log_2(1\!+\!1))\!-\!1
    \right)\geq 0,
\end{split}
\end{equation}
where the inequality holds for any reasonable allocated blocklength $m_n\geq 5$. Therefore, $\varepsilon_{2|1,n}$ is also convex in $b_{2,n}$, i.e., in $\mathbf{b}$ (note that $\varepsilon_{2|1,n}$ is independent to $b_{1,n}$). As a result, we can leverage~\cite[Lemma~1]{joint_letter} to show that the Hessian matrix $\mathbf{H}(\varepsilon_{2|1,n})$ is positive semi-definite, since we have proven $\frac{\partial^2 \varepsilon_{2|1,n}}{\partial b^2_{2,n}}\geq 0$ and $\frac{\partial^2 \varepsilon_{2|1,n}}{\partial a^2_{n}}\geq 0$. In other words, $\varepsilon_{2|1,n}$ is jointly convex in $\mathbf{a}$ and $\mathbf{b}$.

The joint convexity of $\varepsilon_{2|2,n}$  can be also shown using  exactly the same methodology by replacing $b_{2,n}$ with $b_{1,n}$ and $\gamma_{2|2,n}$ with $\gamma_{2|1,n}$ in the above derivations. We omit the details to avoid duplicating the proof. Moreover, it is trivial to show that for $\mathbbm{1}_{\varepsilon<1}(\varepsilon_{1,n})=0$, the error probability is constant, i.e., jointly convex in any variable. Consequently, as a sum of convex functions, {$\varepsilon_{1,n}=\varepsilon_{2|1,n}+\varepsilon_{2|1,n}$ is} also jointly convex in $\mathbf{a}$ and $\mathbf{b}$. In the meantime, the same conclusion can be drawn for $\varepsilon_{2,n}=\varepsilon_{2|1,n}$. Therefore, the objective function is jointly convex
regardless of the value of $\mathbbm{1}_{\varepsilon<1}(\varepsilon_{1,n})$. 
Furthermore, we can show that all constraints are also jointly convex, i.e., $\det \mathbf{H}(a^3_n)=6a_n\geq0$, $\det \mathbf{H}(a^3_n(\frac{1}{b_{1,n}b_{2,n}}+\frac{1}{b_{1,n}}))=\frac{3a^4_n}{b^4_{1,n}}\geq 0$, as well as $\det \mathbf{H}(\frac{1}{b_{1,n}b_{2,n}}+\frac{1}{b_{1,n}}=\frac{4b_{2,n}+3}{b^4_{1,n}b^4_{2,n}})\geq0$.

In summary, both the objective function and constraints are jointly convex in $\mathbf{b}$ and $\mathbf{a}$ within the feasible set. Thus, Problem~\eqref{problem:err_tau_ab} is a convex problem.
\end{proof}
\vspace{-10pt}
\section{Proof of Lemma~\ref{lemma:time_sharing}}\vspace{-10pt}
\begin{proof}
{Let $\mathbf{M}^*_{\text{x}}$ and $\mathbf{P}^*_{\text{x}}$ to be the optimal solutions to the Problem~\eqref{problem:effective_capacity_original} with a non-negative average energy budget $E^*_{\text{x}}$ while $\mathbf{M}^*_{\text{y}}$ and $\mathbf{P}^*_{\text{y}}$ to be the optimal solutions to~\eqref{problem:effective_capacity_original} with $E^*_{\text{y}}$. Let $R^*_x=\frac{1}{M}\sum^N_{n=1}\sum^2_{i=1}R_{i,n}(\mathbf{M}^*_{\text{x}},\mathbf{P}^*_{\text{x}})$ be the respective optimal values. Note that the resources are allocated frame-wise. Therefore, in any frame $\tau$, the optimal blocklength allocation $\mathbf{m}^*[\tau]$ and power allocation $\mathbf{p}^*[\tau]$ are constant and it holds $\frac{1}{L}\sum^L_{\tau=1}\mathbf{m}^*[\tau]\mathbf{p}^*[\tau]\leq E_{\max}$ over all frames.

Now, let $0\leq\nu\leq1$. Since it holds $L\to\infty$, we can take $\nu L$ frames, where the resource allocation corresponding to $\mathbf{M}^*_{\text{x}}$ and $\mathbf{P}^*_{\text{x}}$, with which it achieves the effective capacity of $\nu R^*_{\text{x}}$. Similarly, we take $(1-\nu)L$ frames and allocating the resources as $\mathbf{M}^*_{\text{y}}$ and $\mathbf{P}^*_{\text{y}}$. It can achieve $(1-\nu)R^*_{\text{y}}$. Then , the effective capacity over $L$ frames is $\nu R^*_{\text{x}}+(1-\nu)R^*_{\text{y}}$, while the average energy budget is lesser than $E^*_{\text{x}}+E^*_{\text{y}}$. As a result, the time-sharing property holds.}
\vspace{-10pt}
\end{proof}
\section{Proof of Lemma~\ref{lemma:r_convex}}\vspace{-10pt}
\begin{proof}
The first- and second-order derivatives of $r_{i,n}$ with respect to $m_{n}$ can be written as:
\begin{equation}
\label{eq:dr_dm}
\frac{\partial{r_{i,n}}}{\partial{m_n}}=\frac{1}{2}V_{i,n}^{\frac{1}{2}}m_n^{-\frac{3}{2}}Q^{-1}(\bar{\varepsilon}_{i,n})\log_2e\geq 0,
\vspace{-5pt}
\end{equation}
\begin{equation}
\label{eq:d2r_dm2}
\frac{\partial^2{r_{i,n}}}{\partial{m_n}^2}=-\frac{3}{4}V_{i,n}^{\frac{1}{2}}m_n^{-\frac{5}{2}}Q^{-1}(\bar{\varepsilon}_{i,n})\log_2e\leq 0.
\vspace{-10pt}
\end{equation}
Note that $\frac{\partial{r_{i,n}}}{\partial {m_{\tilde{n}} }}=0$, if $n\neq \tilde{n}$. Therefore, these characterizations imply that $r_{i,n}$ is non-decreasing and concave in $\mathbf{m}$. Next, to simplify the notation, we define $\gamma_{1,n}=\gamma_{1|1,n}$ and $\gamma_{2,n}=\gamma_{2|1,n}$, respectively. 
Likewise, we prove that $-r_{i,n}$ is also non-decreasing and concave in $\boldsymbol{\gamma}$ by showing:\vspace{-5pt}
\begin{equation}
\begin{split}
\label{eq:dr_dgamma}
\frac{\partial{r_{i,n}}}{\partial{\gamma_{i,n}}}&=\!\frac{\partial{\mathcal{C}_{i,n}
}}{\partial{\gamma_{i,n}}}\!-\!\frac{1}{2\ln 2}\sqrt{\frac{1}{V_{i,n}m}}
Q^{-1}(\bar{\varepsilon}_{i,n})\frac{\partial{V_{i,n}}}{\partial{\gamma_{i,n}}}
\!=\!\frac{1}{(\gamma_{i,n}+1)\ln2}\big(1\!-\!\frac{m_n^{-\frac{1}{2}}Q^{-1}(\bar{\varepsilon}_{i,n})}{\sqrt{\gamma_{i,n}^2+2\gamma_{i,n}}(\gamma_{i,n}+1)}\big)\\
\overset{\gamma_{i,n}\geq0}&{\geq}\frac{1}{(\gamma_{i,n}+1)\ln2}{\big(1-\frac{1}{2\sqrt{3}}\big)}\geq0, \nonumber
\end{split}
\end{equation}\vspace{-5pt}
\begin{equation}
\begin{split}
\label{eq:d2r_dgamma2}
\frac{\partial^2{r_{i,n}}}{\partial{\gamma_{i,n}}^2}&=\frac{1}{(\gamma_{i,n}+1)^2\ln2}\bigg(-1+\frac{(\gamma_{i,n}+1)m_n^{-\frac{1}{2}}Q^{-1}(\bar{\varepsilon}_{i,n})}{(\gamma_{i,n}^2+2\gamma_{i,n})^{\frac{3}{2}}}+\frac{2m_n^{-\frac{1}{2}}Q^{-1}(\bar{\varepsilon}_{i,n})}{(\gamma_{i,n}^2+2\gamma_{i,n})^{\frac{1}{2}}(\gamma_{i,n}+1)}\bigg)\\
\overset{\gamma_{i,n}\geq0}&{\leq} \frac{1}{(\gamma_{i,n}+1)^2\ln2}\bigg(-1+\frac{2m_n^{-\frac{1}{2}}Q^{-1}(\bar{\varepsilon}_{i,n})}{3\sqrt{3}}+\frac{2m_n^{-\frac{1}{2}}Q^{-1}(\bar{\varepsilon}_{i,n})}{2\sqrt{3}}\bigg)\\
&{\leq}\frac{1}{(\gamma_{i,n}+1)^2\ln2}{\bigg(-1+\frac{5}{3\sqrt{3}}\bigg)}\leq0.
\end{split}\vspace{-10pt}
\end{equation}
To fully characterize the joint convexity, we also need the partial derivative $\frac{\partial^2{(-r_{i,n})}}{\partial{m_n}\partial{\gamma_{i,n}}}$, i.e.,\vspace{-5pt}
\begin{equation}
    \label{eq:d2r_dmdgamma}
    \frac{\partial^2{(-r_{i,n})}}{\partial{m_n}\partial{\gamma_{i,n}}}=\frac{1}{2(\gamma_{i,n}+1)^3\ln2}v_{i,n}^{-\frac{1}{2}}m_n^{-\frac{3}{2}}Q^{-1}(\bar{\varepsilon}_{i,n})\geq 0.\vspace{-5pt}
\end{equation}

Therefore, combing~\eqref{eq:d2r_dm2},~\eqref{eq:d2r_dgamma2} and~\eqref{eq:d2r_dmdgamma}, the determinant  of Hessian matrix of $-r_{i,n}$ w.r.t. $\mathbf{m}$ and $\boldsymbol{\gamma}$ is given by:\vspace{-5pt}
\begin{equation}
\begin{split}
\det \mathbf{H_{\mathbf{m},\boldsymbol{\gamma}}}(-r_{i,n})&=\det \left( %
\begin{array}{cc} %
\frac{\partial^2{(-r_{i,n})}}{\partial{m_n^2}}& \frac{\partial^2{(-r_{i,n})}}{\partial{m_n}\partial{\gamma_{i,n}}} \\ %
\frac{\partial^2{(-r_{i,n})}}{\partial{\gamma_{i,n}}\partial{m_n}}&\frac{\partial^2{(-r_{i,n})}}{\partial{\gamma^2_{i,n}}}\\ %
\end{array}
\right)%
=\frac{\partial^2{(-r_{i,n})}}{\partial{m_n^2}}\frac{\partial^2{(-r_{i,n})}}{\partial{\gamma_{i,n}}^2}-\frac{\partial^2{(-r_{i,n})}}{\partial{m_n}\partial{\gamma_{i,n}}}\frac{\partial^2{(-r_{i,n})}}{\partial{\gamma_{i,n}}\partial{m_n}}\\
&=\frac{m_n^{-\frac{5}{2}}Q^{-1}(\bar{\varepsilon}_{i,n})}{( \gamma_{i,n}+1)^3 (\ln2)^2}\bigg(\frac{3\sqrt{ \gamma_{i,n}^2+2 \gamma_{i,n}}}{4}-\frac{3m_n^{-\frac{1}{2}}Q^{-1}(\bar{\varepsilon}_{i,n})( \gamma_{i,n}+1)}{4( \gamma_{i,n}^2+2 \gamma_{i,n})}\\
&\qquad \qquad \qquad \qquad -\frac{3m_n^{-\frac{1}{2}}Q^{-1}(\bar{\varepsilon}_{i,n})}{2( \gamma_{i,n}+1)}-\frac{m_n^{-\frac{1}{2}}Q^{-1}(\varepsilon_i)}{4( \gamma_{i,n}+1)( \gamma_{i,n}^2+2 \gamma_{i,n})}\bigg)\\
&\geq\frac{m_n^{-\frac{5}{2}}Q^{-1}(\bar{\varepsilon}_{i,n}) }{(\gamma_{i,n}+1)^3 (\ln2)^2}\bigg(\frac{3\sqrt{3}}{4}-\frac{31m_n^{-\frac{1}{2}}Q^{-1}(\bar{\varepsilon}_{i,n})}{24}\bigg)\geq 0.%
\end{split}\vspace{-5pt}
\end{equation}
Thus, $-r_{i,n}$ is jointly convex in $\mathbf{m}$ and $\boldsymbol{\gamma}$.
\end{proof}\vspace{-10pt}
\section{Proof of Lemma~\ref{lemma:EC_convex}}
\begin{proof}
First, we introduce the auxiliary function $K_{i,n}=m_nr_{i,n}$, in the following:\vspace{-5pt}
\begin{equation}
\begin{split}
\frac{\partial{(-R_{i,n})}}{\partial K_{i,n}}
=&\frac{1}{\theta_{i,n}}\frac{\frac{\partial{\mathbb{E}\big[e^{-\theta_{i,n} m_nr_{i,n}}(1-\varepsilon_i)+\varepsilon_i\big]}}{\partial{K_{i,n}}}}{\mathbb{E}\big[e^{-\theta_{i,n}m_n r_{i,n}}(1-\varepsilon_i)+\varepsilon_i\big]}=-\frac{e^{-\theta_{i,n}m_nr_{i,n}}(1-\varepsilon_i)f_{\mathbf{Z}}(\mathbf{z}[\tau])}{\mathbb{E}\big[e^{-\theta_{i,n}m_nr_{i,n}}(1-\varepsilon_i)+\varepsilon_i\big]}\leq 0,
\end{split}
\end{equation}
\begin{equation}
\label{eq:d2R_dm2}
\begin{split}
\frac{\partial^2{(-R_{i,n})}}{\partial K_{i,n}^2}
=&\frac{\theta_{i,n}e^{-\theta_{i,n}m_nr_{i,n}}(1-\varepsilon_i)f_{\mathbf{Z}}(\mathbf{z}[\tau])}{\mathbb{E}^2\big[e^{-\theta_{i,n}m_nr_{i,n}}(1-\varepsilon_i)+\varepsilon_i\big]}\\
&\cdot\Big(\mathbb{E}\big[e^{-\theta_{i,n}m_nr_{i,n}}(1-\varepsilon_i)+\varepsilon_i\big]-e^{-\theta_{i,n}m_nr_{i,n}}(1-\varepsilon_i)f_{\mathbf{Z}}(\mathbf{z}[\tau])\Big)\\
&\geq \frac{\theta_{i,n}e^{-\theta_{i,n}m_nr_{i,n}}(1-\varepsilon_i)f_{\mathbf{Z}}(\mathbf{z}[\tau])}{\mathbb{E}^2\big[e^{-\theta_{i,n}m_nr_{i,n}}(1-\varepsilon_i)+\varepsilon_i\big]}\varepsilon_if_{\mathbf{Z}}(\mathbf{z}[\tau])\geq 0.
\end{split}\vspace{-5pt}
\end{equation}\vspace{-5pt}
Above inequality holds, since $K_{i,n}$ only depends on the current channel realization $\mathbf{z}[\tau]$. Then, we can show that:
\begin{equation}
\label{eq:d2R_dgamma2}
\begin{split}
\frac{\partial^2{(-R_{i,n})}}{\partial{m_n^2}}&=
\frac{\partial^2(-R_{i,n})}{\partial K_{i,n}^2}\left(\frac{\partial K_{i,n}}{\partial m_n}\right)^2
+\frac{\partial(-R_{i,n})}{\partial K_{i,n}}\frac{\partial^2 K_{i,n}}{\partial m_n^2}\\
&=\frac{\partial^2(-R_{i,n})}{\partial K_{i,n}^2}\bigg(r_{i,n}^2+2r_{i,n}m_n+m^2\frac{\partial r_{i,n}}{\partial m}\bigg)+\frac{\partial(-R_{i,n})}{\partial K_{i,n}}
\bigg(
    2\frac{\partial r_{i,n}}{\partial m_n}+m\frac{\partial^2 r_{i,n}}{\partial m_n^2}
\bigg)\overset{\eqref{eq:dr_dm},~\eqref{eq:d2r_dm2}}{\geq} 0,\nonumber
\end{split}\vspace{-5pt}
\end{equation}\vspace{-5pt}
\begin{equation}
\frac{\partial^2{(-R_{i,n})}}{\partial{\gamma_{i,n}}^2}=m^2\frac{\partial^2(-R_{i,n})}{\partial K_{i,n}^2}
\bigg(
    \frac{\partial r_{i,n}}{\partial \gamma_{i,n}}
\bigg)^2
+m\frac{\partial(-R_{i,n})}{\partial K_{i,n}}\frac{\partial^2 r_{i,n}}{\partial \gamma_{i,n}^2}
\overset{\eqref{eq:dr_dgamma},~\eqref{eq:d2r_dgamma2}}{\geq} 0. \nonumber
\end{equation}
In other words, $-R_{i,n}$ is convex in both $m_n$ and $\gamma_{i,n}$, respectively. In order to further prove the joint convexity, we investigate the   determinant of the Hessian matrix with the help of the auxiliary function $K_{i,n}$: 
\vspace{-10pt}
\begin{equation}
    \label{eq:Hessian_R}
    \begin{split}
        &\det \mathbf{H_{\mathbf{m},\boldsymbol{\gamma}}}(-R_{i,n})%
        =\det \left( %
\begin{array}{cc} %
\frac{\partial^2{(-R_{i,n})}}{\partial{m_n^2}}& \frac{\partial^2{(-R_{i,n})}}{\partial{m_n}\partial{\gamma_{i,n}}} \\ %
\frac{\partial^2{(-R_{i,n})}}{\partial{\gamma_{i,n}}\partial{m_n}}&\frac{\partial^2{(-R_{i,n})}}{\partial{\gamma^2_{i,n}}}\\ %
\end{array}
\right)=\frac{\partial^2{(-R_{i,n})}}{\partial{m_n^2}}\frac{\partial^2{(-R_{i,n})}}{\partial{\gamma_{i,n}}^2}-\frac{\partial^2{(-R_{i,n})}}{\partial{m_n}\partial{\gamma_{i,n}}}\frac{\partial^2{(-R_{i,n})}}{\partial{\gamma_{i,n}}\partial{m_n}}\\
=&
\underbrace{\bigg(
    \frac{\partial^2{(-R_{i,n})}}{\partial K_{i,n}^2} \frac{\partial{K_{i,n}}}{\partial{m_n}}
    \frac{\partial{K_{i,n}}}{\partial{\gamma_{i,n}}}
\bigg)^2
    \!-\!\bigg(
    \frac{\partial^2{(-R_{i,n})}}{\partial K_{i,n}^2} \frac{\partial{K_{i,n}}}{\partial{m_n}}
    \frac{\partial{K_{i,n}}}{\partial{\gamma_{i,n}}}
\bigg)^2}_{=0}\\
&\quad +
\bigg(
    \frac{\partial{(-R_{i,n})}}{\partial K_{i,n}}
\bigg)^2
\underbrace{\bigg(
    \frac{\partial^2{K_{i,n}}}{\partial{m_n^2}}\frac{\partial^2{K_{i,n}}}{\partial{\gamma_{i,n}^2}}
    -\frac{\partial^2{K_{i,n}}}{\partial{m_n}\partial{\gamma_{i,n}}}
\bigg)}_{\geq 0}
-\frac{\partial{(-R_{i,n})}}{\partial K_{i,n}}
\frac{\partial^2{(-R_{i,n})}}{\partial K_{i,n}^2}\\
&\cdot
\Bigg(
    \frac{\partial{K_{i,n}}}{\partial{m_n}}
    \frac{\partial{K_{i,n}}}{\partial{\gamma_{i,n}}}
    \frac{\partial^2{K_{i,n}}}{\partial{\gamma_{i,n}}\partial{m_n}}
    -
    \bigg(
        \frac{\partial{K_{i,n}}}{\partial{\gamma_{i,n}}}
    \bigg)^2
    \frac{\partial^2{K_{i,n}}}{\partial{m_n^2}}
    -
    \bigg(
        \frac{\partial{K_{i,n}}}{\partial{m_n}}
    \bigg)^2
    \frac{\partial^2{K_{i,n}}}{\partial{\gamma_{i,n}^2}}
\Bigg)\\
\geq& \tilde{A}
\Bigg(
    \bigg(
        \frac{\partial{r_{i,n}}}{\partial{\gamma_{i,n}}}    
    \bigg)^2
    \bigg(
        2\underbrace{\frac{\partial{r_{i,n}}}{\partial{m_n}}}_{\geq 0}+2r_{i,n}
        +m_n \underbrace{\big(
            -m_n\frac{\partial^2{r_{i,n}}}{\partial{m_n^2}}
            -2\frac{\partial{r_{i,n}}}{\partial{m_n}}
        \big)}_{\geq 0}
    \bigg)\\
    +&\underbrace{\frac{\partial{r_{i,n}}}{\partial{\gamma_{i,n}}}}_{\geq 0}
    \underbrace{\frac{\partial^2{r_{i,n}}}{\partial{\gamma_{i,n}}\partial m_n}}_{\geq 0}
    \bigg(
        2r_{i,n}m_n+m_n^2\underbrace{\frac{\partial{r_{i,n}}}{\partial{m_n}}}_{\geq 0}
    \bigg)
    -\underbrace{\frac{\partial^2{r_{i,n}}}{\partial{\gamma_{i,n}^2}}}_{\leq 0}
    \bigg(
        r_{n,i}^2+
        (r_{n,i}+1)m_n\underbrace{\frac{\partial{r_{i,n}}}{\partial{m_n}}}_{\geq 0}
    \bigg)
\Bigg)\geq 0,
\end{split}
\end{equation}\vspace{-5pt}
where $\tilde{A}=-m\frac{\partial{(-R_{i,n})}}{\partial K_{i,n}}
\frac{\partial^2{(-R_{i,n})}}{\partial K_{i,n}^2}\geq 0$. Thus, according to~\eqref{eq:d2R_dm2} and~\eqref{eq:Hessian_R}, $-R_{i,n}$ is jointly convex in $m_n$ and $\gamma_{i,n}$.
\end{proof}
\bibliographystyle{IEEEtran}
\bibliography{reference_joint_NOMA}
\end{document}